\documentclass[letterpaper, 10 pt, journal, twoside]{IEEEtran}
%

\usepackage[utf8]{inputenc}
\usepackage{booktabs}
\usepackage{multirow}
\usepackage{adjustbox}   

\usepackage[british]{babel}
\usepackage{hhline}
\usepackage{multirow}

\usepackage{amsmath,amsfonts}
\usepackage[ruled,vlined, noend]{algorithm2e}

\usepackage{array}
\usepackage[caption=false,font=normalsize,labelfont=sf,textfont=sf]{subfig}
\usepackage{textcomp}
\usepackage{stfloats}
\usepackage{url}
\usepackage{verbatim}
\usepackage{tabularx}
\usepackage{graphicx}
\usepackage{cite}
\usepackage{enumerate}
\usepackage{hyperref}
\hyphenation{op-tical net-works semi-conduc-tor IEEE-Xplore}
\usepackage{tikz}
\usepackage{xcolor}
\usepackage{adjustbox}
\usetikzlibrary{calc}
\newcommand{\tikzmark}[1]{\tikz[overlay,remember picture,baseline] \node [anchor=base] (#1) {};}%

\usepackage{dsfont} 
\usepackage{amsmath}
\usepackage{amssymb}
\usepackage{amsfonts,amsthm}       
\usepackage{amsmath,amsbsy,amssymb,mathtools}
\usepackage{nicefrac}       
\usepackage{microtype}      
\usepackage{tikz}

\newtheorem{remark}{Remark}

\usepackage{url}
\usepackage{color}
\usepackage{subcaption}
\usepackage[utf8]{inputenc}
\usepackage[T1]{fontenc}
\usepackage{balance}
\usepackage{algorithmic}
\usepackage[capitalize]{cleveref}
\crefformat{equation}{(#2#1#3)}


\newtheorem{theorem}{Theorem}

\usepackage{comment}
\usepackage{mathrsfs}
\usepackage[textfont=md,font=footnotesize]{caption}


\newcommand{\doi}[1]{\href{http://dx.doi.org/#1}{\normalsize{\textsc{doi:}}~\nolinkurl{#1}}}
\newcommand{\arxiv}[1]{\href{http://arxiv.org/abs/#1}{\normalsize{\textsc{arxiv:}}~\nolinkurl{#1}}}

\renewcommand{\epsilon}{\varepsilon}


\let\originalleft\left
\let\originalright\right
\renewcommand{\left}{\mathopen{}\mathclose\bgroup\originalleft}
\renewcommand{\right}{\aftergroup\egroup\originalright}

\def\clap#1{\hbox to 0pt{\hss#1\hss}}




\definecolor{darkmidnightblue}{rgb}{0.0, 0.2, 0.4}
\DeclareMathAlphabet{\mathpzc}{OT1}{pzc}{m}{it}

\newcommand{\jlnote}[1]%
    {\textcolor{orange}{\textbf{JL: #1}}}
\newcommand{\dlnote}[1]%
    {\textcolor{cyan}{\textbf{DL: #1}}}

\newcommand{\traj}{{\xi_{x_0}^{\pi,\phi}}}
\newcommand{\Lfx}{{L_{f_x}}}
\newcommand{\Lfd}{{L_{f_d}}}
\newcommand{\Lfxtau}{{L_{f_x}^\tau}}
\newcommand{\Lfxt}{{L_{f_x}^t}}
\newcommand{\Lf}{{L_f}}

\newcommand{\Aupper}{{\hat{A}}}
\newcommand{\Bupper}{{\hat{B}}}
\newcommand{\Dupper}{{\hat{D}}}
\newcommand{\Alower}{{\check{A}}}
\newcommand{\Blower}{{\check{B}}}
\newcommand{\Dlower}{{\check{D}}}
\newcommand{\Cupper}{{\hat{c}}}
\newcommand{\Clower}{{\check{c}}}

\newcommand{\nominalx}{{\bar{x}}}
\newcommand{\nominalu}{{\bar{u}}}

\newcommand{\neighborset}{{\mathcal{E}}}

\newcommand{\nominaldisturbancetrajx}{{\tilde{x}}}

\newcommand{\xerr}{{\epsilon_x}}
\newcommand{\derr}{{\epsilon_d}}

\newcommand{\safetyconstraintindex}{{\mathcal{I}}}

\newcommand{\NNcontrolpolicy}{{\pi_{\theta_u}}}
\newcommand{\NNdisturbancepolicy}{{\phi_{\theta_d}}}
\newcommand{\NNvalue}{{V_\theta}}
\newcommand{\NNQ}{{Q_{\theta_q}}}
\newcommand{\NNQparam}{{\theta_q}}
\newcommand{\NNcontrolparam}{{\theta_u}}
\newcommand{\NNdisturbanceparam}{{\theta_d}}

\newcommand{\reachavoidmeasurecombined}{{g(\xi_{x_0}^{\pi, \phi}, t)}}
\newcommand{\statetraj}{{\xi_{x_0}^{\pi,\phi}}}

\newcommand{\statetrajt}{{x_t}}
\newcommand{\statetrajtau}{{x_\tau}}
\newcommand{\xdist}{{\mathbb{P}}}

\newcommand{\controlspace}{{\mathcal{U}}}
\newcommand{\disturbancespace}{{\mathcal{D}}}
\newcommand{\controlpolicy}{{\pi}}
\newcommand{\disturbancepolicy}{{\phi}}

\newcommand{\valuefunc}{{V_\gamma(x)}}
\newcommand{\valuepure}{{V_\gamma}}

\newcommand{\controldim}{{\mathbb{R}^{m_u}}}
\newcommand{\disturbancedim}{{\mathbb{R}^{m_d}}}
\newcommand{\targetset}{{\mathcal{T}}}
\newcommand{\constraintset}{{\mathcal{C}}}

\newcommand{\surrogateconstraintset}{{\check{\mathcal{C}}}}
\newcommand{\surrogatetargetset}{{\check{\mathcal{T}}}}

\newcommand{\reward}{{r}}
\newcommand{\constraint}{{c}}

\newcommand{\raset}{{\mathcal{R}}}
\newcommand{\learnedraset}{{\hat{\mathcal{R}}}}

\newcommand{\Bellmanbackup}{{B_\gamma}}
\newcommand{\valuefuncone}{{V_\gamma^1}}
\newcommand{\valuefunctwo}{{V_\gamma^2}}

\newcommand{\certifiedset}{{\mathcal{S}}}
\newcommand{\reachavoidmeasure}{{g}}
\newcommand{\discountedreachavoidmeasure}{{g_\gamma}}
\newcommand{\classicalvalue}{{\bar{V}}}

\newcommand{\superzerolevelsetofV}{{\mathcal{V}_\gamma}}

\newcommand{\tone}{{t_1}}
\newcommand{\ttwo}{{t_2}}
\newcommand{\pione}{{\pi_1}}
\newcommand{\pitwo}{{\pi_2}}

\newcommand{\reachavoidcertificateL}{{\check{V}^L_{\gamma}}}
\newcommand{\reachavoidcertificateqp}{{\check{V}^{S}_{\gamma}}}

\newcommand{\rlowerLPt}{{\check{r}_t^L}}
\newcommand{\clowerLPt}{{\check{c}_t^L}}
\newcommand{\clowerLPtau}{{\check{c}_\tau^L}}
\newcommand{\rlowersocpt}{{\check{r}_t^S}}
\newcommand{\clowersocpt}{{\check{c}_t^S}}
\newcommand{\clowersocptau}{{\check{c}_\tau^S}}

\newcommand{\dynamicsfeasiablesetLP}{{\mathcal{X}_{t,\bar{x}_0}^L}}
\newcommand{\dynamicsfeasiablesetsocp}{{\mathcal{X}_{t,\nominalx_0}^S}}

\newcommand{\revise}[1]{\textcolor{black}{#1}}
\newcommand{\newrevise}[1]{\textcolor{black}{#1}}


%

%

%
\ifCLASSINFOpdf
\else
\fi
\hyphenation{op-tical net-works semi-conduc-tor}

\begin{document}
%

\title{
Certifiable Reachability Learning Using a New Lipschitz Continuous Value Function
}%
%
%

\author{
Jingqi Li$^{1}$, Donggun Lee$^{2}$, Jaewon Lee$^{3}$, Kris Shengjun Dong$^{1}$, Somayeh Sojoudi$^{1}$, Claire Tomlin$^{1}$
\thanks{
This work was supported by DARPA under the Assured Autonomy \revise{(grant FA8750-18-C-0101)} and ANSR programs \revise{(grant FA8750-23-C-0080)}, the NASA ULI program in Safe Aviation Autonomy \revise{(grant 62508787-176172)}, and the ONR Basic Research Challenge in Multibody Control Systems \revise{(grant N00014-18-1-2214)}. \revise{This work was also supported by 
U.S. Army Research Laboratory and Research Office (grant W911NF2010219).}
}
\thanks{$^{1}$J. Li, K. Dong, S. Sojoudi and C. Tomlin are with University of California, Berkeley, CA 94704, USA. {\tt\footnotesize jingqili@berkeley.edu, krisdong@berkeley.edu, sojoudi@berkeley.edu, tomlin@berkeley.edu}.} 
\thanks{$^{2}$D. Lee is with the North Carolina State University, NC 27606, USA. {\tt\footnotesize dlee48@ncsu.edu}.} 
\thanks{$^{3}$J. Lee is with Boson AI, CA 95054, USA. {\tt\footnotesize lonj7798@gmail.com}. }
\thanks{Digital Object Identifier (DOI): see top of this page.}
}
\markboth{IEEE Robotics and Automation Letters. Preprint Version. Accepted January, 2025}
{Li \MakeLowercase{\textit{et al.}}: Certifiable Reachability Learning}

%



\maketitle

\begin{abstract}
We propose a new reachability learning framework for high-dimensional nonlinear systems, focusing on \emph{reach-avoid problems}. These problems require computing the \emph{reach-avoid set}, which ensures that all its elements can safely reach a target set \revise{despite disturbances} within pre-specified bounds. Our framework has two main parts: offline learning of a newly designed reach-avoid value function, and post-learning certification. Compared to prior work, our new value function is Lipschitz continuous and its associated Bellman operator is a contraction mapping, both of which improve the learning performance. 
To ensure deterministic guarantees of our learned reach-avoid set, we introduce two efficient post-learning certification methods. 
Both methods can be used online for real-time local certification or offline for comprehensive certification. We validate our framework in a 12-dimensional crazyflie drone racing hardware experiment and a simulated 10-dimensional highway take-over example. 
\end{abstract}

\begin{IEEEkeywords}
Reachability, Machine Learning for Robot Control, Reinforcement Learning
\end{IEEEkeywords}

%
\IEEEpeerreviewmaketitle


\section{Introduction}
\label{sec:introduction}


\IEEEPARstart{E}{nsuring} the safe and reliable operation of robotic systems in 
uncertain environments is a critical challenge as autonomy is introduced into everyday systems. For instance, we would like humanoid robots to safely work close to humans.  As a second example, new concepts for air taxis will need real-time synthesis of safe trajectories in crowded airspace. These safety-critical applications are typically characterized by sequences of tasks, 
and knowing the set of states from which the task can be safely completed despite unpredictable disturbances is important.  
Reachability analysis addresses this challenge by determining the \emph{reach-avoid set}—a set of states that can safely reach a target set under all possible disturbances within a specified~bound, as well as the corresponding control.

Traditional~Hamilton-Jacobi reachability analysis methods \cite{tomlin1998conflict,lygeros2011differentialgameshj,fisac2015reach} leverage dynamic programming 
to synthesize the optimal control and a reachability value function, whose sign indicates whether or not a state can safely reach the target set. Though theoretically sound, they suffer from the 
\emph{curse of dimensionality} \cite{bellman1957dynamic}:  as the system's dimension increases, the computational complexity grows exponentially, making these methods impractical for real-world applications without approximation or further logic to manage the problem size.

There~has~been~interest~in using machine learning techniques to estimate reachability value functions for high-dimensional systems \cite{bansal2020deepreach, fisac2019bridging, hsusafety, hsu2023sim, hsu2023safety, hsu2023isaacs, li2023learning}. However, a major drawback of \revise{existing~reachability~learning~methods} is the lack of deterministic safety guarantees. 
Recent work \cite{lin2023generating,lin2024verification} provides probabilistic safety guarantees for the learned reach-avoid sets. 
Additionally, safety filter approaches \cite{hsu2023isaacs,li2023learning, nguyen2024gameplay} have been proposed, which offer point-wise guarantees by ensuring safety for individual states. 

\revise{In this work, 
we propose the first method for learning reach-avoid sets for high-dimensional nonlinear systems with deterministic assurances. Our method involves learning a new reach-avoid value function (Sections~\ref{sec:methods} and \ref{sec:learning}), and then conducting set-based certification to ensure that all states in the certified set safely reach the target set despite disturbances (Section~\ref{sec:certification}).} Specifically:

\begin{figure}[t]
    \centering
    \includegraphics[width=0.49\textwidth]{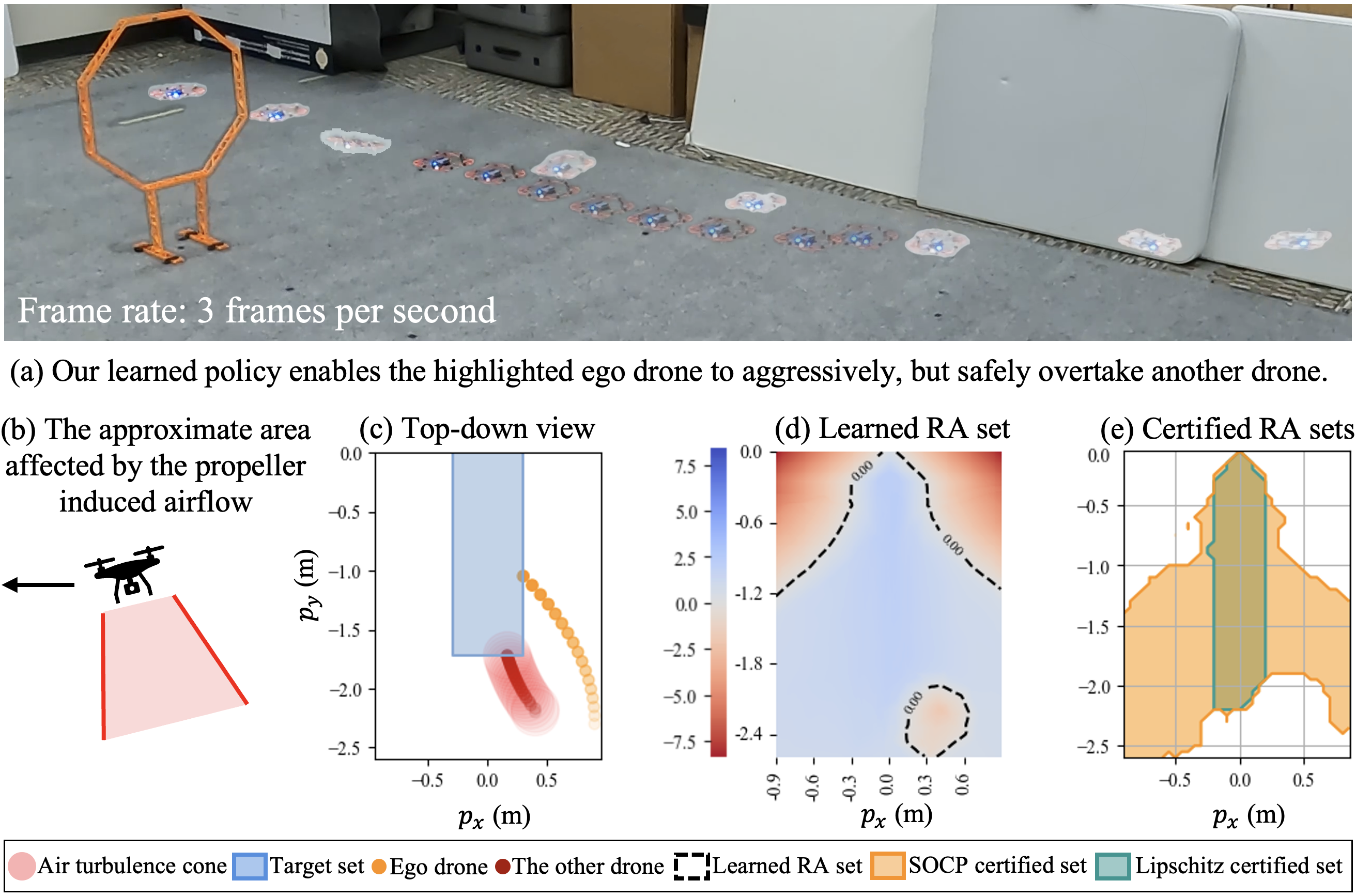}\vspace{-0.4em}
    \caption{Applying our reachability analysis framework to drone racing. 
    In (a), hardware experiments demonstrate that our learned control policy enables an ego drone to safely overtake another drone, despite unpredictable disturbances in the other drone's acceleration. 
    In (b), we illustrate the concept of the propeller induced airflow \cite{flem2024experimental}, which can affect other drones' flight. In (c), we apply our learned control policy in a simulation with randomly sampled disturbances.
    In (d), we project the learned reach-avoid value function onto the $(x,y)$ position of the ego drone. The super-zero level set, outlined by dashed curves, indicates our learned reach-avoid (RA) set. In (e), we plot the certified RA sets using Lipschitz and second-order cone programming certification.}\vspace{-1.3em}
    \label{fig:front}
\end{figure}



1) \revise{We propose a new reach-avoid value function that is provably Lipschitz continuous. \revise{Though it is not based on a Lagrange-type objective function} (cumulative rewards over time) as
in classical reinforcement learning (RL) \cite{bertsekas2012dynamic}, we prove its
Bellman equation is still a contraction mapping, removing the need for contractive Bellman equation approximation
commonly used in prior reachability works \cite{hsu2023isaacs,li2023learning}.
Moreover, the control policy derived from
our value function tends to reach the target set rapidly.
We apply deep RL to learn this new value function.}

2) We develop two reach-avoid set certification methods. The first uses the Lipschitz constant of the dynamics to certify the safety of a subset of the learned reach-avoid set, ensuring all its elements can safely reach the target set \revise{under} disturbances. The second employs second-order cone programming to do the same. Both methods offer deterministic assurances. 
They can be applied online to verify if a neighboring set around the current state can safely reach the target set, \newrevise{
or offline for verifying 
the same for 
a larger user-defined set.} 

3) We show the computational benefits of our new value function and the assurance of our (real-time) certification methods through simulations and hardware experiments. We empirically justify that 
    the Lipschitz continuity of our new value function can accelerate value function learning. \vspace{-1.em}
\section{Related works}

\noindent\textbf{Hamilton-Jacobi reachability learning.} DeepReach \cite{bansal2020deepreach} is a pioneering work on learning finite-horizon reachability value functions. 
In other studies, such as \cite{fisac2019bridging, hsusafety, hsu2023safety, hsu2023sim, hsu2023isaacs, nguyen2024gameplay, li2023learning}, the assumption of a known horizon is relaxed and infinite horizon reachability learning problems are considered. In this work, we also consider the infinite horizon case. We introduce a new value function which is provably Lipschitz continuous and whose Bellman operator is a contraction mapping, offering computational efficiency when compared with prior work.

\noindent\textbf{Verification of learning-based control}. 
Recent work \cite{lin2023generating,lin2024verification,singh2024imposing} has provided probabilistic safety guarantees for DeepReach.  In this paper, we introduce methods that provide deterministic reach-avoid guarantees and we show how they could be used locally in real time. Other studies \cite{hsu2023isaacs, nguyen2024gameplay, li2023learning} provide point-wise safety filters. However, we propose set-based reach-avoid certification methods to verify if all states in a set can safely reach the target set \revise{under} potential disturbances. Our certification methods also differ from existing set-based approaches for verifying neural network-controlled systems, including those that verify regions of attraction \cite{nguyen2021stability, korda2022stability, schwan2023stability}, \revise{forward reachability sets \cite{coogan2020mixed,lew2021sampling ,
everett2021efficient, hu2020reach, kwonconformalized}, and safe sets using barrier certificates \cite{prajna2004safety, mazouz2022safety, xiao2023barriernet}. To the best of the authors’ knowledge,  our work is the first
to certify if a set of initial states is
within the ground truth reach-avoid set. }

\noindent\textbf{Constrained optimal control}. 
Control barrier functions (CBFs) offer safety guarantees \cite{ames2016control, taylor2020learning, qin2021learning}, but they tend to be more conservative than model predictive control (MPC) \cite{richalet1976algorithmic, lopez2019tubempc}, which optimally balances task performance and safety. However, constructing or learning CBFs and solving MPC can be difficult for nonlinear systems with complex constraints. By leveraging deep neural networks, constrained reinforcement learning (CRL) \cite{achiam2017constrained, chow2019lyapunov, yang2021wcsac} learns control policies that maximize task rewards while adhering to complex constraints. 
\newrevise{CBFs, along with the typical value functions defined in MPC and CRL, do not provide the information about whether a state can safely reach the target set.} In contrast, our new value function not only provides the optimal control for the worst-case disturbance, but also indicates, based on its sign, whether or not a state \revise{can safely reach the target set}.






\section{Problem Formulation}
\label{sec:problem_formulation}
\revise{We consider uncertain nonlinear dynamics described by }\vspace{-0.3em}
\begin{equation}
    x_{t+1} = f(x_t, u_t, d_t),\vspace{-0.3em}
    \label{eq:dynamics}
\end{equation}
where $x_t \in \mathbb{R}^n$ is the state, $u_t\in \controlspace\subseteq \controldim$ is the control, 
and $d_t\in\disturbancespace\subseteq\disturbancedim$ represents the disturbance, such as model mismatch or uncertain actions of other agents. 
\revise{We assume that both $\controlspace$ and $\disturbancespace$ are compact and connected sets. The disturbance bound could be estimated from prior data or a physical model}. 
We define a state trajectory originating from \revise{an initial condition} $x_0$ under a control policy $\revise{\pi}:\mathbb{R}^n \to \controlspace$ and a disturbance policy $\revise{\phi}: \mathbb{R}^n\times \controlspace\to \disturbancespace$ as $\traj: = \{x_t\}_{t=0}^\infty$, where $x_{t+1}= f(x_t, \controlpolicy(x_t), \disturbancepolicy(x_t, \controlpolicy(x_t)))$,\revise{~$\forall t \in \{0,1,2,\cdots\}$}. 

\revise{Let \(\targetset \subseteq \mathbb{R}^n\) be an open set, representing a \emph{target set}.} 
We assume that there exists a Lipschitz continuous, bounded reward function $\revise{r}:\mathbb{R}^n\to \mathbb{R}$ indicating if a state $x$ is in the target set,\vspace{-0.3em} 
\begin{equation}
r(x)>0\Longleftrightarrow x\in \targetset. \vspace{-0.3em}
\end{equation}
We consider a finite number of Lipschitz continuous, bounded constraint functions $c_i(x)>0,\forall i \in \safetyconstraintindex$, where $\safetyconstraintindex$ is the set of indexes of constraints. \revise{Throughout this paper, we define Lipschitz continuity with respect to the \revise{$\ell_2$} norm.} We can simplify the representation of constraints by considering their pointwise minimum $c(x):=\min_{i\in\safetyconstraintindex}c_i(x)$. We define the constraint set as $\constraintset := \{x\in\mathbb{R}^n:c(x)>0 \}$, and we have\vspace{-0.3em}
\begin{equation}
    c(x)>0 \Longleftrightarrow x\in \constraintset.\vspace{-0.3em}
\end{equation}
We look for states that can be controlled to the target set safely under \emph{the worst-case disturbance}, with dynamics given by \eqref{eq:dynamics}. 
We refer to this set as the \textbf{\emph{reach-avoid (RA) set}} \cite{tomlin1999computing,lygeros2011differentialgameshj}: 
\begin{equation}\label{eq:raset}
    \raset:=\left\{ \begin{aligned}x_0 :& \exists \pi\ \textrm{such that } \forall \phi,\  \exists T<\infty, \\ 
    &\big(r(x_T)>0 \ \wedge\  \forall t\in[0,T], c(x_t)>0\big) \end{aligned}\right\},
\end{equation}
which includes all the states that can reach the target set safely in finite time \revise{despite disturbances} within the set $\disturbancespace$. 



\noindent\textbf{Running example, safe take-over in drone racing:}  We model the drone take-over example in Figure~\ref{fig:front} as an RA problem, 
where two crazyflie drones \cite{crazyflie2017} compete to fly through an orange gate. 
The first drone (ego agent) starts behind and aims to overtake the other drone. 
The second drone flies directly to the gate using an LQR controller, but its acceleration is uncertain to the first drone. We model the uncertain part of the other drone's acceleration by a disturbance $\|d_t \|_2 \le\epsilon_d := 0.1\textrm{ m/s}^2$. We compute the RA set to ensure the ego drone can safely overtake the other despite this disturbance. 

We consider a 12-dimensional dynamics \cite{pichierri2023crazychoir}, where the $i$-th drone's state is $x_t^i=[p_{x,t}^i, v_{x,t}^i, p_{y,t}^i, v_{y,t}^i, p_{z,t}^i, v_{z,t}^i]$. In each of the $(x,y,z)$ axes, the $i$-th drone is modeled by double integrator dynamics, and the control is its acceleration $u_t^i=[a_{x,t}^i,a_{y,t}^i, a_{z,t}^i]$, with $\|u_t^i\|_\infty\le \epsilon_u:=1\textrm{m/s}^2$. 
We model the center of the gate as the origin. The radius of the orange gate is 0.3 meters, and the radius of the crazyflie drone is 0.05 meters. 
We consider a target set for the ego drone:
\begin{equation}\label{eq:drone target}
    \targetset= \left\{\begin{aligned} x:\  & p_{y}^1 - p_{y}^2>0,\hspace{0.5cm} v_{y}^1 - v_{y}^2>0, \\ & |p_{x}^1|< 0.3,\hspace{0.75cm} |p_{z}^1|<0.3 \end{aligned} \right\}.
\end{equation}
To ensure the ego drone flies through the gate, we constrain:
\begin{equation}\label{eq:constraint pass}
\begin{aligned}
    \pm p_{x,t}^1 - p_{y,t}^1  > -0.05,\hspace{0.5cm}  \pm p_{z,t}^1 - p_{y,t}^1  > -0.05.
\end{aligned}
\end{equation}
To ensure safe flight, the ego drone should avoid the area affected by the airflow from the other drone, as depicted in Figure~\ref{fig:front}, using the constraint: \vspace{-0.3em}
\begin{equation}
    \begin{aligned}
         \left\| \begin{bmatrix}
            p_{x,t}^1 - p_{x,t}^2\\ p_{y,t}^1 - p_{y,t}^2
        \end{bmatrix}\right\|_2^2 >\Big(1+ \max(p_{z,t}^2- p_{z,t}^1,0) \Big) \times 0.2 ,
    \end{aligned}\label{eq:safety cone}\vspace{-0.3em}
\end{equation}
where the required separation distance between the ego drone and the other drone increases as their height difference grows.

Numerically computing the RA set directly for this problem is computationally infeasible \cite{bansal2017hamilton}. We introduce our new reachability learning method in the following sections.

\section{A New Reach-Avoid Value Function}\label{sec:methods}
In this section, we propose a new RA value function for evaluating if a state belongs to the RA set. Unlike prior works \cite{fisac2019bridging, hsusafety, hsu2023safety, hsu2023sim, hsu2023isaacs, nguyen2024gameplay, li2023learning}, our value function incorporates a time-discount factor. This results in a Lipschitz-continuous value function, which appears to accelerate the learning process, and establishes a contractive Bellman equation, eliminating the need for the contractive Bellman equation approximation commonly used in prior works \cite{fisac2019bridging, hsusafety, hsu2023safety, hsu2023sim, hsu2023isaacs, nguyen2024gameplay, li2023learning}. 
Furthermore, we show that the \revise{control policy} derived from this new value function tends to reach the target set rapidly.




We begin the construction of our new value function by first introducing the concept of \emph{RA measure}, which assesses whether a trajectory can reach the target set safely. Let $\statetraj$ be a trajectory that enters the target set safely at a stage $t$. We have $\reward(\statetrajt)>0$ and $\constraint(\statetrajtau)>0$ for all $\tau \in\{0,1,\dots ,t\}$. 
In other words, the \emph{RA measure} $\reachavoidmeasurecombined$, defined as
\begin{equation}
\begin{aligned}
     \reachavoidmeasurecombined
     :=\min \Big\{  r\big(\statetrajt\big), \min_{\tau=0,\dots, t}  c\big(\statetrajtau\big) \Big\},
\end{aligned}
\end{equation}
is positive, $\reachavoidmeasurecombined>0$, if and only if there exists a trajectory from $x_0$ reaching the target set safely.

An \emph{RA value function} $\classicalvalue(x)$ has been proposed in prior works \cite{fisac2019bridging, hsusafety,hsu2023safety,hsu2023sim,hsu2023isaacs,nguyen2024gameplay,li2023learning}, and it evaluates the maximum RA measure 
under the worst-case disturbance: 
\begin{equation}\label{eq:define classical RA value}
    \classicalvalue(x):= 
    \max_{\controlpolicy}\min_{\disturbancepolicy}\sup_{t=0,\dots}\reachavoidmeasurecombined ,
\end{equation}
where $\classicalvalue(x)>0$ if and only if $x\in \raset$. 
We compute $\classicalvalue(x)$ by solving its Bellman equation. However, $\classicalvalue(x)$ has a non-contractive Bellman equation, whose solution may not recover $\raset$, as shown in Figure~\ref{fig:counter example continuity}. To address this, 
prior works \cite{fisac2019bridging, hsusafety, hsu2023safety, hsu2023sim, hsu2023isaacs, nguyen2024gameplay, li2023learning} include a time-discount factor $\gamma$ to create a contractive Bellman equation approximation. For each $\gamma\in(0,1)$, there is a unique solution to the approximated Bellman equation, which converges to $\classicalvalue(x)$ as $\gamma$ is gradually annealed to 1.  

Inspired by previous studies, we enhance computational efficiency by designing a new \emph{time-discounted} RA value function. Ours incorporates a time-discount factor into the value function formulation, resulting in a contractive Bellman equation without the need for any approximation. This improvement eliminates the requirement of the $\gamma$-annealing process commonly used in prior works \cite{fisac2019bridging, hsusafety, hsu2023safety, hsu2023sim, hsu2023isaacs, nguyen2024gameplay, li2023learning}, \revise{where $N$ approximated Bellman equations are solved sequentially with $\gamma$ values converging to 1. Theoretically, computing our new value function requires only $\frac{1}{N}$ of the time needed in prior works. }


The central part of our new value function is the \emph{time-discounted RA measure} $\discountedreachavoidmeasure(\statetraj,t)$, for a $\gamma \in(0,1)$,\vspace{-0.2em}
\begin{equation}    \discountedreachavoidmeasure(\statetraj,t):= \min \Big\{  \gamma^t r\big(\statetrajt\big), \min_{\tau=0,\dots, t} \gamma^\tau c\big(\statetrajtau\big) \Big\}.\vspace{-0.3em}
\end{equation}
This yields a new time-discounted RA value function
\begin{equation}\label{eq:inf_horizon_reach_avoid_problem}
\begin{aligned}
    \valuefunc :=\max_{\controlpolicy} \min_{\disturbancepolicy}\sup_{t=0,\dots} \discountedreachavoidmeasure(\statetraj, t).
\end{aligned}\vspace{-0.3em}
\end{equation}
For all $\gamma\in(0,1)$ and any finite stage $t$, we have
\begin{equation}
    \discountedreachavoidmeasure(\statetraj,t)>0  \Longleftrightarrow \reachavoidmeasure(\statetraj,t)>0.\vspace{-0.5em}
\end{equation}
Therefore, for all $\gamma\in(0,1)$, the super-zero level set of $\valuefunc$, defined as $\superzerolevelsetofV:=\{x:\valuefunc>0\}$, is equal to the RA set $\raset$ in \eqref{eq:raset}, and it includes all possible states that can reach the target set safely in finite time under the worst-case disturbance.
\begin{figure}[t!]
    \centering
    \includegraphics[width=0.49\textwidth, trim={0cm 0.5cm 0cm 0cm}
    ]{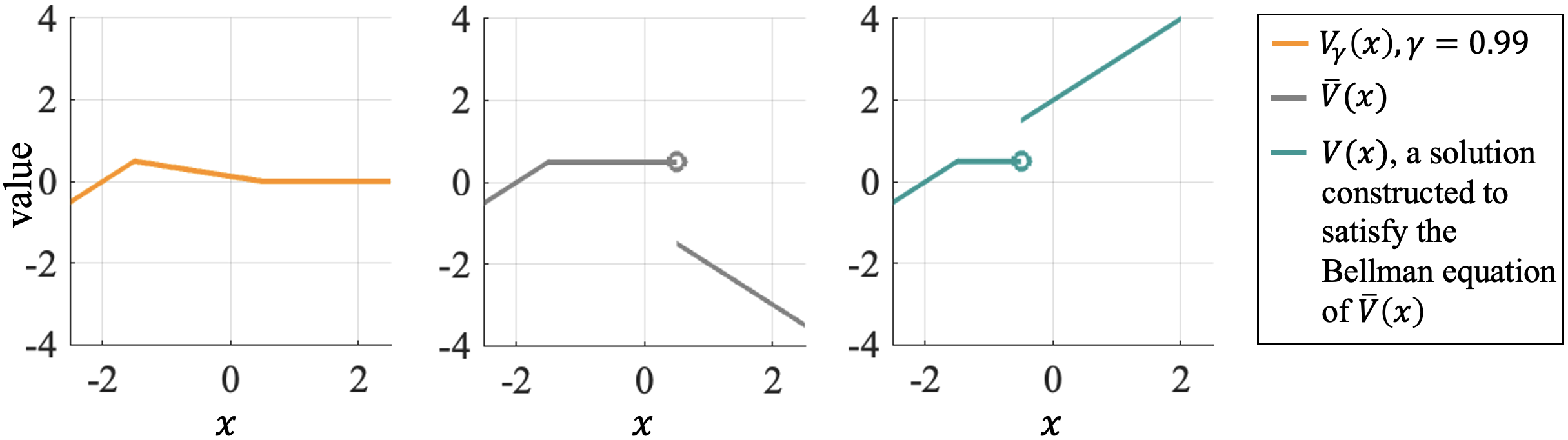}\vspace{-0.4em}
    \caption{Comparing $\valuefunc$ with $\classicalvalue(x)$ from \eqref{eq:define classical RA value} and $V(x)$, a constructed solution to the Bellman equation of $\classicalvalue(x)$ in prior works \cite{fisac2019bridging, hsusafety, hsu2023safety, hsu2023sim, hsu2023isaacs, nguyen2024gameplay, li2023learning}. Consider a 1-dimensional dynamics: $x_{t+1} = 1.01x_t + 0.01(u_t + d_t)$, with $|u_t| \le 1$ and $|d_t| \le 0.5$. We associate $\targetset = \{x : x < -1\}$ and $\constraintset = \{x : x > -2\}$ with bounded, Lipschitz continuous functions $r(x) = \max(\min( -(x + 1), 10), -10)$ and $c(x) = \max(\min(x + 2, 10), -10)$, respectively. For all $\gamma\in(0,1)$, our super-zero level set $\{x:\valuefunc>0\}$ equals the RA set $\raset=\{x:-2<x<0.5\}$. By Theorem \ref{lemma:continuity}, $\valuefunc$ is Lipschitz continuous if $\gamma\in(0, 0.99009)$. The super-zero level set of $\classicalvalue(x)$ also recovers $\raset$, but $\classicalvalue(x)$ is discontinuous at $x=0.5$ because the control fails to drive the state to $\targetset$ under the worst-case disturbance when $x_t\ge0.5$. 
    Finally, in the third subfigure, we show that the Bellman equation in prior works \cite{fisac2019bridging, hsusafety, hsu2023safety, hsu2023sim, hsu2023isaacs, nguyen2024gameplay, li2023learning} has non-unique solutions, whose super-zero level set may not equal $\raset$. 
    }\vspace{-1.4em}
    \label{fig:counter example continuity}
\end{figure}

In what follows, we present the advantages of our new value function. 
First, we show that the Bellman equation for $\valuefunc$ is a contraction mapping, with $\valuefunc$ as its unique solution.\vspace{-0.3em}
\begin{theorem}[Contraction mapping]\label{thm:Bellman}
Let $\gamma\in(0,1)$ and $V:\mathbb{R}^n\to\mathbb{R}$ be an arbitrary bounded function. Consider the Bellman operator $\Bellmanbackup[V] $ defined as,\vspace{-0.3em}
\begin{equation*}
\begin{aligned}
    \revise{\Bellmanbackup[V](x)}\hspace{-0.1cm} \coloneqq \hspace{-0.05cm} \max_u \min_{d}\min \hspace{-0.05cm} \big\{ c(x), \max\{ r(x), \gamma  V(f(x,u,d)) \}\big\}.
\end{aligned}\vspace{-0.3em}
\end{equation*}
Then, we have $    \|\Bellmanbackup[\valuefuncone] - \Bellmanbackup[\valuefunctwo]\|_\infty \le \gamma \| \valuefuncone-\valuefunctwo \|_\infty$, for all bounded functions $\valuefuncone$ and $\valuefunctwo$, and $\valuefunc$ in \eqref{eq:inf_horizon_reach_avoid_problem} is the unique solution to the Bellman equation \revise{$V(x)=\Bellmanbackup[V](x)$}. 
\end{theorem}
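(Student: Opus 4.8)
The plan is to separate the two assertions: the contraction estimate, which is essentially mechanical, and the claim that the value function $V_\gamma$ of \eqref{eq:inf_horizon_reach_avoid_problem} is a fixed point of $B_\gamma$, from which uniqueness follows by the Banach fixed-point theorem. For the contraction, I would fix $x$ and compare the two integrands pointwise. The operator $B_\gamma$ is the composition, in order, of the affine map $V\mapsto \gamma V(f(x,u,d))$, the pointwise operations $a\mapsto\max\{r(x),a\}$ and $a\mapsto\min\{c(x),a\}$, and the game value $\max_u\min_d$. Each of the last three is non-expansive in the sup norm: $|\max\{r(x),a\}-\max\{r(x),b\}|\le|a-b|$, $|\min\{c(x),a\}-\min\{c(x),b\}|\le|a-b|$, and both $\sup$ and $\inf$ satisfy $|\sup_\alpha F(\alpha)-\sup_\alpha G(\alpha)|\le\sup_\alpha|F(\alpha)-G(\alpha)|$. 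Only the affine map contributes a factor, namely $|\gamma V_1(f(x,u,d))-\gamma V_2(f(x,u,d))|\le\gamma\|V_1-V_2\|_\infty$ uniformly in $(u,d)$. Chaining these estimates, first over $d$ for fixed $u$ and then over $u$, gives $|B_\gamma[V_1](x)-B_\gamma[V_2](x)|\le\gamma\|V_1-V_2\|_\infty$, and taking the supremum over $x$ yields the stated bound. Since $r,c$ are bounded, $V_\gamma$ is bounded, and the space of bounded functions on $\mathbb{R}^n$ with the sup norm is complete, so Banach's theorem gives a unique fixed point $V^*$.

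The substantive step is the dynamic programming principle $V_\gamma=B_\gamma[V_\gamma]$, whose algebraic core is a one-step decomposition of the discounted RA measure. Writing $\xi'$ for the trajectory shifted forward one stage (so $x'_s=x_{s+1}$), the discount structure gives $\gamma^\tau c(x_\tau)=\gamma(\gamma^{\tau-1}c(x'_{\tau-1}))$ and likewise for $r$, so that for $t\ge1$ one obtains $g_\gamma(\xi,t)=\min\{c(x_0),\gamma\,g_\gamma(\xi',t-1)\}$. Isolating the $t=0$ term $\min\{r(x_0),c(x_0)\}$ and applying the lattice identity $\max\{\min\{a,b\},\min\{a,c\}\}=\min\{a,\max\{b,c\}\}$ yields
\[\sup_{t\ge0}g_\gamma(\xi,t)=\min\big\{c(x_0),\,\max\{r(x_0),\,\gamma\sup_{s\ge0}g_\gamma(\xi',s)\}\big\}.\]
Taking $\max_\pi\min_\phi$ and arguing that the first pair $(u_0,d_0)$ can be optimized separately from the tail — the control commits to $u_0=\pi(x_0)$, the disturbance responds with $d_0=\phi(x_0,u_0)$ consistent with the information pattern, and the remaining trajectory from $x_1=f(x_0,u_0,d_0)$ is optimized to $V_\gamma(x_1)$ — collapses the right-hand side to
\[\max_{u_0}\min_{d_0}\min\{c(x_0),\max\{r(x_0),\gamma V_\gamma(f(x_0,u_0,d_0))\}\}=B_\gamma[V_\gamma](x_0).\]
By the uniqueness from the contraction, this fixed point is $V_\gamma$, closing the proof.

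I expect the main obstacle to be making the policy-splitting step rigorous rather than the algebra itself. The two delicate points are: (i) showing that optimizing over stationary feedback policies $\pi,\phi$ matches the one-step-plus-tail reformulation, i.e. that an $\epsilon$-optimal tail policy may be concatenated after the first stage without loss; and (ii) preserving the order of play, so that the disturbance's dependence on the realized control through $\phi(x_0,u_0)$ produces $\max_u\min_d$ and not the reverse order. I would address (i) by an $\epsilon$-optimal pasting argument, bounding the suboptimality of concatenated policies and letting $\epsilon\to0$ (the $\gamma^t$ discount makes the far-future contributions $\gamma^t r(x_t)$ and $\gamma^\tau c(x_\tau)$ vanish, so the supremum over $t$ is effectively attained at finite horizon, which legitimizes the truncation and the interchange of the supremum with $\max_\pi\min_\phi$). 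I would address (ii) by tracking throughout that $\phi$ is evaluated at the already-committed control at each stage. Once both are settled, the algebraic recursion above identifies $V_\gamma$ with the unique fixed point guaranteed by the contraction.
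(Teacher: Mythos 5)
Your proposal is correct and follows essentially the same route as the paper's proof: the contraction estimate comes from the non-expansiveness of the pointwise $\min\{c(x),\cdot\}$, $\max\{r(x),\cdot\}$ and $\max_u\min_d$ operations with the factor $\gamma$ entering only through $\gamma V(f(x,u,d))$, and the fixed-point property comes from the same one-step decomposition of the discounted RA measure $g_\gamma$ followed by a policy-to-action reduction. The only difference is presentational: where you propose an explicit $\epsilon$-optimal pasting argument and the lattice identity $\max\{\min\{a,b\},\min\{a,c\}\}=\min\{a,\max\{b,c\}\}$ to make the splitting rigorous, the paper compresses this step by invoking the optimal policies $\pi^*,\phi^*$ and citing Bertsekas (p.~234) for the reduction to $\max_{u_0}\min_{d_0}$.
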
\vspace{-0.5em}
\begin{proof}

\revise{Let $\controlpolicy^*$ and $\disturbancepolicy^*$ be the optimal control and the worst-case disturbance policies. Observe $\valuepure(x_0) =  \max_\controlpolicy \hspace{-0.1em}\min_\disturbancepolicy\hspace{-0.1em} \min\hspace{-0.1em}\big\{ c(x_0),  \max \{r(x_0), \hspace{-0.3em} \gamma \sup_{\tau=0,\dots}\reachavoidmeasure(\xi_{x_1}^{\controlpolicy^*, \disturbancepolicy^*}\hspace{-0.1em}, \tau)\} \big\}$ $=\max_\controlpolicy \min_\disturbancepolicy\min \big\{ c(x_0), \max \{r(x_0),  \gamma  \valuepure(x_1)\} \big\}$,
where $x_1 = f(x_0,\pi(x_0),\phi(x_0, \pi(x_0)))$ is the only variable affected by $\controlpolicy$ and $\disturbancepolicy$. Following \cite[p.234]{bertsekas2012dynamic}, it can be rewritten as $\valuepure(x_0) = \max_{u_0} \min_{d_0} \min \big\{ c(x_0), \max \{r(x_0),  \gamma \valuepure(x_1)\} \big\}.$}

Thus, $\valuepure(x)$ is a valid solution to the Bellman equation $V=\Bellmanbackup[V]$. We show it is a unique solution by proving that $\Bellmanbackup[V]$ is a contraction mapping \revise{when $\gamma\in(0,1)$}, i.e., $\| \Bellmanbackup[V_1] - \Bellmanbackup[V_2] \|_\infty\le \gamma \left\| V_1 - V_2 \right\|_{\infty}$, where $V_1$ and $V_2$ are two arbitrary bounded functions. Let $x$ be an arbitrary state. We have \revise{$\|\Bellmanbackup[V_1](x)  - \Bellmanbackup[V_2](x)\|_\infty
    \le  \| \gamma \max_u \min_d V_1(f(x,u,d))  - \gamma \max_u \min_d V_2(f(x,u,d))\|_\infty$}.  
Since the max-min operator is non-expansive, we have, for all $x$, \revise{$\|\Bellmanbackup[V_1](x)  - \Bellmanbackup[V_2](x)\|_\infty\le \gamma\| V_1(x) - V_2(x)\|_\infty$}.
\end{proof}


Theorem~\ref{thm:Bellman} suggests that annealing $\gamma$ to 1 is unnecessary in our method because, for all $\gamma\in(0,1)$, our Bellman equation admits $\valuefunc$ as the unique solution, and the super-zero level set of $\valuefunc$ equals the ground truth RA set. 




Furthermore, we show in the following result that our new value function can be constructed to be Lipschitz continuous, which facilitates efficient learning when approximating high-dimensional value functions using neural networks \cite{gouk2021regularisation,xiong2022deterministic}. 

\begin{theorem}[Lipschitz continuity]\color{black}
Suppose that $r(\cdot)$ and $c(\cdot)$ are $L_r$- and $L_c$-Lipschitz continuous functions, respectively. Assume also that the dynamics $f(x,u,d)$ is $L_f$-Lipschitz continuous in $x$, for all $u\in\mathcal{U}$ and $d\in\mathcal{D}$. Let $L:= \max(L_r,L_c)$. Then,~$\valuefunc$ is $L$-Lipschitz~continuous~if~$\gamma L_f<1$.
\label{lemma:continuity}
\end{theorem}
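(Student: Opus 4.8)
The plan is to leverage the contraction established in Theorem~\ref{thm:Bellman} rather than to argue directly from the trajectory-based definition \eqref{eq:inf_horizon_reach_avoid_problem}, which would force us to control how trajectories emanating from nearby initial states diverge over an infinite horizon. Since $\Bellmanbackup$ is a $\gamma$-contraction on the Banach space of bounded functions under the sup-norm, with $\valuepure$ as its unique fixed point, and since the bounded $L$-Lipschitz functions form a closed (hence complete) subset of that space, it suffices to show that this subset is invariant under $\Bellmanbackup$: if $V$ is bounded and $L$-Lipschitz, then so is $\Bellmanbackup[V]$. The fixed point will then necessarily lie in this subset.

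To prove invariance, I would first invoke the standard fact that the pointwise operators $\min$ and $\max$, as well as the composite operator $\max_u\min_d$, are non-expansive: if two families of scalars differ entrywise by at most $K$, their $\min$ (resp.\ $\max$, resp.\ $\max_u\min_d$) differ by at most $K$. Fix arbitrary states $x,y$ and an arbitrary pair $(u,d)\in\controlspace\times\disturbancespace$, and examine the inner expression $\min\{c(\cdot),\max\{r(\cdot),\gamma V(f(\cdot,u,d))\}\}$. Its three constituents are $\Lc$-, $\Lr$-, and $(\gamma L\Lf)$-Lipschitz in the base point, the last because $V$ is $L$-Lipschitz and $f(\cdot,u,d)$ is $\Lf$-Lipschitz. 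Applying non-expansiveness of $\max$ and then $\min$ componentwise, and finally of $\max_u\min_d$, yields
\begin{equation*}
\left| \Bellmanbackup[V](x) - \Bellmanbackup[V](y) \right| \le \max\{\Lc,\Lr,\gamma L \Lf\}\,\norm{x-y}.
\end{equation*}
Because $L=\max(\Lr,\Lc)$ and the hypothesis $\gamma\Lf<1$ forces $\gamma L\Lf<L$, the maximum on the right collapses to $L$, so $\Bellmanbackup[V]$ is $L$-Lipschitz.

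Finally, I would conclude by iteration: starting from any bounded $L$-Lipschitz $\valuepurezero$ (for instance a constant function), the iterates $\valuepurekplusone=\Bellmanbackup[\valuepurek]$ are all $L$-Lipschitz by the invariance just shown, and they converge uniformly to $\valuepure$ by the contraction property of Theorem~\ref{thm:Bellman}. Since a uniform (indeed pointwise) limit of $L$-Lipschitz functions is again $L$-Lipschitz, $\valuepure$ is $L$-Lipschitz, as claimed. The only step that genuinely requires care — and the sole place where the hypothesis $\gamma\Lf<1$ is used — is verifying that the discounted future term does not inflate the Lipschitz constant, i.e.\ that $\gamma L\Lf<L$; everything else is routine bookkeeping with the non-expansiveness of $\min$, $\max$, and the minimax operator.
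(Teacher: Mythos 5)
Your proof is correct, but it takes a genuinely different route from the paper's. The paper argues directly from the trajectory definition \eqref{eq:inf_horizon_reach_avoid_problem}: it couples two trajectories started at $x_0$ and $x_0'$, driven by the same control sequence (generated by the optimal policy along the first trajectory) and the same disturbance sequence (generated by the worst-case policy along the second), picks an $\epsilon$-optimal stopping time $\bar t$, and bounds the difference of discounted RA measures by $\max\{\Lr(\gamma\Lf)^{\bar t},\ \max_{\tau\le\bar t}\Lc(\gamma\Lf)^{\tau}\}\,\|x_0-x_0'\|_2 \le L\|x_0-x_0'\|_2$, invoking $\gamma\Lf<1$ at exactly the same point you do. Your argument instead casts the result as a fixed-point invariance statement: bounded $L$-Lipschitz functions form a closed subset of the sup-norm Banach space; the operator $\Bellmanbackup$ maps this subset into itself because $\min$, $\max$, and $\max_u\min_d$ are non-expansive and the discounted term $\gamma V(f(\cdot,u,d))$ has Lipschitz constant $\gamma L\Lf\le L$; hence the unique bounded fixed point $\valuepure$ guaranteed by Theorem~\ref{thm:Bellman} is the uniform limit of $L$-Lipschitz iterates and is itself $L$-Lipschitz. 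What your route buys: it sidesteps the delicate trajectory-coupling and policy-interchange steps of the paper's proof (which tacitly assume existence of optimizing policies and that the cross-coupled disturbance yields valid upper and lower bounds on the two values), it is modular (any property that is sup-norm-closed and $\Bellmanbackup$-invariant automatically transfers to $\valuepure$), and it in fact establishes the conclusion under the slightly weaker hypothesis $\gamma\Lf\le 1$. What it costs: it is contingent on Theorem~\ref{thm:Bellman} holding on the full space of bounded functions, whereas the paper's argument works directly from the game definition and yields the quantitative intuition the paper emphasizes afterward, namely that a small perturbation of the initial state changes the time-discounted RA measure by a proportionally bounded amount.
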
\vspace{-0.5em}
\begin{proof}
Consider two arbitrary initial states $x_0,x_0'\in\mathbb{R}^n$. Let $\pi^*$ and $\phi^*$ be the optimal control and the worst-case disturbance policies. For each $t\in \{0,1,2,\dots\} $, define $x_{t+1} := f(x_t, u_t, d_t') $, $u_t: = \pi^*(x_t)$, $x_{t+1}' := f(x_t',u_t, d_t')$, and $d_t' := \phi^*(x_t', u_t)$. Let $\mathbf{u}:=\{u_t\}_{t=0}^\infty$, $\mathbf{d}':=\{d_t'\}_{t=0}^\infty$, $\mathbf{x} := \{x_t\}_{t=0}^\infty$, and $\mathbf{x}':=\{x_t'\}_{t=0}^\infty$. Given an arbitrarily small $\epsilon>0$, there exists a $\bar{t}<\infty$ such that $V_\gamma(x_0) \leq  g_\gamma(\mathbf{x},\bar{t}) + \epsilon.$ By definition, we have $V_\gamma(x_0') \geq  g_\gamma(\mathbf{x}',\bar{t}) $. Combining two inequalities, we have $V_\gamma(x_0')  - V_\gamma(x_0 ) + \epsilon 
    \geq \min\{ \gamma^{\bar{t}} (r(x'_{\bar t})-r(x_{\bar t})),
    \min_{\tau=0,...,\bar{t}} \gamma^\tau (c(x'_{\bar t}) - c(x_{\bar t})) \}
    \geq - \max\{ L_r \gamma^{\bar{t}}L_f^{\bar{t}}, \max_{\tau=0,...,\bar{t}} L_c \gamma^{\tau}L_f^{\tau } \} \|x_0-x_0'\|_2.$ 
The condition $\gamma L_f <1$ implies $(\gamma L_f)^{\bar{t}}<1$, $\forall\bar{t}$. As a result, $   V_\gamma(x'_0)  - V_\gamma(x_0 ) + \epsilon \geq  -L \|x_0-x_0'\|_2
$. Similarly, we can show that $
    V_\gamma(x_0)  - V_\gamma(x_0' ) + \epsilon \geq -L \|x_0-x_0'\|_2.
$
Combining these two inequalities, we prove Theorem \ref{lemma:continuity}.
\end{proof}

The main idea of the proof is that a small perturbation in the state $x$ leads to a bounded change of the time-discounted RA measure value. Theorem~\ref{lemma:continuity} suggests that we can ensure the Lipschitz continuity of $\valuefunc$ by selecting $\gamma<\frac{1}{\Lf}$. In contrast, the classical RA value function $\classicalvalue(x)$ can be discontinuous, as shown in Figure~\ref{fig:counter example continuity}.




\revise{Moreover, the control policy derived from $\valuefunc$ reaches the target set quickly}, as a trajectory that reaches the target set rapidly incurs a high time-discounted RA measure value. 

\begin{theorem}[Fast reaching]\label{thm:reaching time}
    \revise{Let $x$ be in the RA set and $\disturbancepolicy$ be an arbitrary disturbance policy. Let $\gamma\in(0,1)$. Suppose $(\pione, \tone)$ and $(\pitwo, \ttwo)$ are two control policies and corresponding times to maximize the discounted RA measure $g_\gamma$, 
        $(\pione, \tone), (\pitwo, \ttwo)\in \arg\max_{\controlpolicy, t} g_\gamma(\xi_x^{\controlpolicy, \disturbancepolicy},t).$
    Moreover, suppose $\tone < \ttwo$, then for all $ \check{\gamma}\in(0,\min\{\gamma,\frac{\valuefunc}{\max_x r(x)}\})$, we have $      g_{\check\gamma}(\xi_x^{\pitwo, \disturbancepolicy},\ttwo) < g_{\check\gamma}(\xi_x^{\pione, \disturbancepolicy},\tone)
$.} 
\end{theorem}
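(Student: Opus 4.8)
The plan is to leverage that the two optimizing pairs attain the \emph{same} maximal discounted RA measure at discount $\gamma$, and then to track how this shared value degrades differently for the two trajectories once the discount is lowered to $\check\gamma$. Write $V^\star := \discountedreachavoidmeasure(\xi_x^{\pione,\disturbancepolicy},\tone) = \discountedreachavoidmeasure(\xi_x^{\pitwo,\disturbancepolicy},\ttwo) = \max_{\controlpolicy,t}\discountedreachavoidmeasure(\xi_x^{\controlpolicy,\disturbancepolicy},t)$; since $x\in\raset$ we have $V^\star \ge \valuefunc > 0$. Let $x_\tau^{(1)},x_\tau^{(2)}$ denote the states along the two trajectories, set $\bar r := \max_x r(x)$, and abbreviate $R_i := r(x_{t_i}^{(i)})$.

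First I would unpack the equality $\discountedreachavoidmeasure(\xi_x^{\pione,\disturbancepolicy},\tone)=V^\star$ into per-stage lower bounds on the \emph{undiscounted} reward and constraint values along trajectory~1: since the outer minimum equals $V^\star$, each argument is at least $V^\star$, giving $R_1 \ge V^\star/\gamma^{\tone}$ and $c(x_\tau^{(1)}) \ge V^\star/\gamma^{\tau}$ for all $\tau=0,\dots,\tone$. Substituting these into $\discountedreachavoidmeasure(\xi_x^{\pione,\disturbancepolicy},\tone)$ evaluated at $\check\gamma$ and using $\check\gamma<\gamma$ (so $(\check\gamma/\gamma)^{\tau}$ is decreasing in $\tau$ while $V^\star>0$), each term is bounded below by $V^\star(\check\gamma/\gamma)^{\tau}$, which is smallest at the \emph{latest} stage. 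This yields the clean bound $\discountedreachavoidmeasure(\xi_x^{\pione,\disturbancepolicy},\tone) \ge V^\star(\check\gamma/\gamma)^{\tone}$.

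Next I would bound trajectory~2 trivially from above by discarding all but its reward term: $\discountedreachavoidmeasure(\xi_x^{\pitwo,\disturbancepolicy},\ttwo) \le \check\gamma^{\ttwo}R_2 \le \check\gamma^{\ttwo}\bar r$. The theorem then reduces to the scalar inequality $V^\star(\check\gamma/\gamma)^{\tone} > \check\gamma^{\ttwo}\bar r$, i.e. $\check\gamma^{\ttwo-\tone} < V^\star/(\gamma^{\tone}\bar r)$ after dividing by $\check\gamma^{\tone}>0$. I would close this by chaining three elementary facts: $\gamma^{\tone}\le 1$ gives $V^\star/(\gamma^{\tone}\bar r) \ge V^\star/\bar r \ge \valuefunc/\bar r$; the hypothesis forces $\check\gamma < \valuefunc/\bar r$; and $\ttwo-\tone\ge 1$ with $\check\gamma\in(0,1)$ gives $\check\gamma^{\ttwo-\tone}\le\check\gamma$. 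Concatenating, $\check\gamma^{\ttwo-\tone}\le\check\gamma<\valuefunc/\bar r\le V^\star/(\gamma^{\tone}\bar r)$, the required strict inequality.

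I expect the main obstacle to be the first step: correctly converting the discount-$\gamma$ minimum-equality into the per-stage bounds and recognizing that trajectory~1's lower bound is governed by its latest stage $\tone$ via monotonicity of $(\check\gamma/\gamma)^{\tau}$; the remaining comparison is purely arithmetic. The one subtlety I would check is the interplay between $\valuefunc$ and $V^\star$: because $\disturbancepolicy$ is an arbitrary (not necessarily worst-case) disturbance policy, one only has $V^\star \ge \valuefunc$, but this inequality points in the favorable direction and the hypothesis on $\check\gamma$ is stated precisely with $\valuefunc$, so the chain still closes. The degenerate case $\tone=0$ needs no separate treatment since $\gamma^{0}=1$.
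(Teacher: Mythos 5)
Your proof is correct and takes essentially the same route as the paper's: both bound $g_{\check\gamma}(\xi_x^{\pitwo,\disturbancepolicy},\ttwo)$ above by $\check\gamma^{\tone}\,\check\gamma\max_x r(x)$, pass through $\valuefunc$ using the hypothesis $\check\gamma<\valuefunc/\max_x r(x)$, and lower-bound $g_{\check\gamma}(\xi_x^{\pione,\disturbancepolicy},\tone)$ by $(\check\gamma/\gamma)^{\tone}g_\gamma(\xi_x^{\pione,\disturbancepolicy},\tone)$. The only difference is expository: you justify that last inequality explicitly via per-stage bounds on $r$ and $c$ along trajectory~1 (and you place the strict inequality at $\check\gamma\max_x r(x)<\valuefunc$, which cleanly covers the case $\tone=0$), whereas the paper asserts it within a single chain of inequalities.
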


\begin{proof}
    \revise{From the definitions of $\check{\gamma}$, $\discountedreachavoidmeasure$, and $\valuefunc$, and the boundedness of $r(x)$, we have $\check{\gamma}<1$}, and $g_{\check{\gamma}}(\xi_x^{\pitwo, \disturbancepolicy}, \ttwo) \le \check{\gamma}^{\tone} \check{\gamma} \max_x r(x) \le \check{\gamma}^{\tone} V_\gamma(x) < \big(\frac{\check{\gamma}}{\gamma}\big)^{\tone} V_\gamma(x) \le \big(\frac{\check{\gamma}}{\gamma}\big)^{\tone} g_{\gamma}(\xi_x^{\pione, \disturbancepolicy}, \tone)
         \le  g_{\check{\gamma}}(\xi_x^{\pione, \disturbancepolicy}, \tone)$.
\end{proof}
Theorem~\ref{thm:reaching time} also suggests that a control policy reaching the target set slowly may become suboptimal when $\gamma$ is decreased. 

While a small time-discount factor in $\valuefunc$ offers numerous benefits, it is not conclusive that $\gamma$ should always be near zero. In theory, for all $\gamma\in(0,1)$, the super-zero level set of $\valuefunc$ recovers the exact RA set. However, in practice, a near-zero $\gamma$ can lead to a conservatively estimated RA set, where a trajectory reaching the target set at a late stage may have a near-zero or even negative time-discounted RA measure due to numerical errors. We will explore the trade-offs of selecting various $\gamma$ values in Section \ref{sec:gamma experiment}. \vspace{-1em} 

\section{Learning the New RA Value Function}\label{sec:learning}
\revise{Since the optimal RA control policy is deterministic \cite{tomlin1999computing}}, we adapt max-min Deep Deterministic Policy Gradient (DDPG) \cite{li2019robust}, a deep RL method for learning deterministic policies and their value functions, to learn $\controlpolicy$, $\disturbancepolicy$ and \revise{$\valuepure$}.

Let $\gamma$ be an arbitrary time discount factor in $(0,1)$. Similar to prior works \cite{hsu2023isaacs,nguyen2024gameplay,li2023learning}, we approximate the optimal control policy $\controlpolicy^*(x)$ and the worst-case disturbance policy $\disturbancepolicy^*(x,\controlpolicy^*(x))$ by neural network (NN) policies $\NNcontrolpolicy(x)$ and $\NNdisturbancepolicy(x)$, respectively, with $\NNcontrolparam$ and $\NNdisturbanceparam$ being their parameters. We define an NN Q function as \revise{$\NNQ:\mathbb{R}^n \times \controlspace \times \disturbancespace \to \mathbb{R}$}, where $\NNQparam$ represents the NN's parameter vector. Substituting $\NNcontrolpolicy$ and $\NNdisturbancepolicy$ into $\NNQ$, we can derive an NN value function $\NNvalue(x):=\NNQ(x, \NNcontrolpolicy(x), \NNdisturbancepolicy(x))$, where $\theta$ is the concatenation of parameters $\NNQparam$, $\NNcontrolparam$ and $\NNdisturbanceparam$. Let $\mathbb{P}$ be a sampling distribution \revise{with a sufficiently large support in $\mathbb{R}^n$ that covers at least a part of the target set}. In max-min DDPG, we learn $\NNcontrolpolicy$, $\NNdisturbancepolicy$ and $\NNQ$ by \revise{alternatively} optimizing the following problems:

We learn $\NNcontrolpolicy$ by maximizing the Q value over $\NNcontrolparam$:\vspace{-0.4em}
\begin{equation}\label{eq:actor loss}
    \max_{\NNcontrolparam} \mathbb{E}_{x \sim \xdist}\  \NNQ(x,\NNcontrolpolicy(x),\NNdisturbancepolicy(x)).\vspace{-0.5em}
\end{equation}

We learn $\NNdisturbancepolicy$ by minimizing the Q value over $\NNdisturbanceparam$:\vspace{-0.4em}
\begin{equation}\label{eq:adversary loss}
    \min_{\NNdisturbanceparam}\mathbb{E}_{x\sim \xdist}\  \NNQ(x,\NNcontrolpolicy(x), \NNdisturbancepolicy(x)).\vspace{-0.5em}
\end{equation}

We learn $\NNQ$ by minimizing the \emph{critic loss}, also known as the Bellman equation error, over $\NNQparam$:\vspace{-0.4em}
\begin{equation}\label{eq:critic loss}
\begin{aligned}
    \min_\NNQparam  \mathbb{E}_{x\sim \xdist} \| \NNvalue(x) - \Bellmanbackup[\NNvalue(x)] \|_2^2.
\end{aligned}\vspace{-0.4em}
\end{equation}

\vspace{-0.5em}

We define the \emph{learned RA set} $\learnedraset$ as the super zero-level set of $\NNvalue(x)$. 
When DDPG converges to an optimal solution
, $\NNvalue(x)$ converges to $\valuefunc$ due to Theorem~\ref{thm:Bellman}. 
However, in practice, like other deep RL methods, DDPG often converges to a suboptimal solution with a near-zero critic loss. 
When $\NNvalue(x)$ is a suboptimal solution, $\learnedraset$ cannot be reliably considered as the ground truth RA set $\raset$. 
This motivates us to use a suboptimal learning result to certify a trustworthy RA set, as detailed in the following section. 
\vspace{-1em}

\section{Certifying RA Sets with Guarantees}\label{sec:certification}
In this section, we propose two methods to certify if a set of states belongs to the ground truth RA set. Both methods use a learned control policy, which is~not~necessarily~optimal.

\subsection{Certification using Lipschitz constants}\label{sec:Lipschitz}

We leverage a learned control policy $\NNcontrolpolicy$ and the Lipschitz constants of dynamics, reward and constraint to construct a theoretical lower bound of the ground truth value function $\valuepure(x)$. \emph{If such a lower bound of $\valuepure(x)$ is greater than zero for all states in the neighboring set of $x_0$, $\neighborset_{x_0}:=\{x: \|x-x_0\|_2\le \epsilon_x\}$, then $V_\gamma(x)>0$, $\forall x\in \neighborset_{x_0}$. We claim that the set $\neighborset_{x_0}$ is within the ground truth RA set $\raset$. }


We begin constructing a lower bound of $\valuefunc$ by considering a $T$-stage, disturbance-free, nominal trajectory $\{\bar{x}_t\}_{t=0}^{T}$, \vspace{-0.4em}
\begin{equation}\label{eq:nominal traj}
\begin{aligned}
    \nominalx_{t+1} = f(\nominalx_t, \nominalu_t, 0),
    \nominalu_t = \NNcontrolpolicy(\nominalx_t),\ \ \forall t=0,\dots,T-1
\end{aligned}
\end{equation}
\revise{and a disturbed state trajectory $\{\nominaldisturbancetrajx_t\}_{t=0}^T$ under $\{\nominalu_t\}_{t=0}^{T-1}$ using an arbitrary $d_t\in\disturbancespace$, $\forall t=0,\dots,T-1$:}\vspace{-0.3em}
\begin{equation}\label{eq:real traj}
    \nominaldisturbancetrajx_{t+1} = f(\nominaldisturbancetrajx_t,\nominalu_t, d_t
    ), \  \ \forall t=0,\dots,T-1. 
\end{equation}
Note that if we can verify that a trajectory starting at state $x$ reaches the target set safely \revise{despite disturbances} within $T$ stages, then it suffices to claim $x \in \raset$, where the certification horizon $T$ can be set arbitrarily. Ideally, we would set $T=\infty$, but it is impractical to evaluate an infinitely long trajectory. Therefore, during certification, we consider a finite, user-defined~$T$. This $T$ should preferably be long enough to allow initial states to reach the target set. A short $T$ results in a conservative certification since it overlooks the possibility that the trajectory might safely reach the target set at a later time.

We assume that \revise{the dynamics $f$} is Lipschitz continuous and there exists an upper bound on the disturbance in $\disturbancespace$ at each stage $t$, i.e., $\|d_t\|_2\le \derr$. Let $\Lfx$ and $\Lfd$ be the Lipschitz constants of the dynamics $f$ with respect to the state $x$ and disturbance $d$, respectively. At time $t=1$, we observe \vspace{-0.5em}
\begin{equation*}
\begin{aligned}
    \| \nominalx_{1} &- \nominaldisturbancetrajx_{1} \|_2 {=} \|f(\nominalx_0, \nominalu_0, 0) - f(\nominaldisturbancetrajx_0, \nominalu_0, d_0
    )\|_2\\
    & \le \Lfx \|\nominalx_0 - \nominaldisturbancetrajx_0\|_2 + \Lfd \|0-d_0\|_2\le \Lfx \xerr + \Lfd \derr.
\end{aligned}\vspace{-0.5em}
\end{equation*}
At time $t = 2$,
\begin{equation*}
    \begin{aligned}
        \|\nominalx_2 - \nominaldisturbancetrajx_2\|_2 &{=} \|f(\nominalx_1, \nominalu_1, 0) - f(\nominaldisturbancetrajx_1, \nominalu_1, d_1)\|_2\\
        & \le \Lfx \|\nominalx_1 - \nominaldisturbancetrajx_1\|_2 +\Lfd \derr.
    \end{aligned}\vspace{-0.5em}
\end{equation*}
By induction, we have \vspace{-0.5em}
\begin{equation}\label{eq:xt LP bound}
    \begin{aligned}
        \|\nominalx_t - \nominaldisturbancetrajx_t\|_2\le \Lfxt \xerr + \Sigma_{\tau=0}^{t-1}\  \Lfxtau \Lfd \derr\ \  =: \Delta x_t.
    \end{aligned}
\end{equation}
We define a convex outer approximation of the set of dynamically feasible states as $\dynamicsfeasiablesetLP:=\{x_t: \|x_t - \nominalx_t\|_2 \le \Delta x_t \}$, and we check if for all $x_t\in\dynamicsfeasiablesetLP$, $r(x_t)>0$ and $c(x_t)>0$. 
By Lipschitz continuity of the reward function, we have, 
\begin{equation*}
\forall  x_t \in \dynamicsfeasiablesetLP, \ \ \ \    \|r(\bar{x}_t) - r(x_t)\|_2 \le  L_r \Delta x_t
\end{equation*}
which yields a lower bound of $r(x_t)$, for all $ x_t \in \dynamicsfeasiablesetLP$:
\begin{equation}\label{eq:LP, reward lower}
     \rlowerLPt :=  r(\bar{x}_t) - L_r \Delta x_t \le r(x_t). 
\end{equation}
Similarly, we have a lower bound of $c(x_t)$, for all $x_t \in \dynamicsfeasiablesetLP$:
\begin{equation}\label{eq:LP, constraint lower}
    \clowerLPt := c(\bar{x}_t) - L_c \Delta x_t \le c(x_t). 
\end{equation}

\noindent Using $\rlowerLPt$ and $\clowerLPt$, we can construct a lower bound $\reachavoidcertificateL(\nominalx_0,T)$ for $\valuepure(x_0)$, for all $x_0 \in \neighborset_{\nominalx_0}$:
\begin{equation}\label{eq:Lipschitz lower bound}
\begin{aligned}
    \reachavoidcertificateL(\nominalx_0, T)& := \max_{t=0,\dots,T}\min\{ \gamma^t 
    \rlowerLPt, \min_{\tau=0,\dots,t} \gamma^\tau 
    \clowerLPtau
    \}\le \valuepure(x_0).
\end{aligned}
\end{equation}
This implies 
\begin{equation}
\begin{aligned}
\reachavoidcertificateL(\nominalx_0,T)>0 \Longrightarrow &\valuepure(x_0)>0, \forall x_0 \in \mathcal{E}_{\nominalx_0}.
\end{aligned}
\end{equation}
Thus, when $\reachavoidcertificateL(\nominalx_0,T)>0$, the set $\neighborset_{\nominalx_0}$ is \emph{certified} to be within the ground truth RA set $\raset$. Moreover, $\{\nominalu_t\}_{t=0}^{T-1}$ are the \emph{certified control inputs} that can drive all $x\in \neighborset_{\nominalx_0}$ to the target set $\targetset$ safely \revise{despite disturbances} in $\disturbancespace$. \vspace{-0.5em}

\subsection{Certification using second-order cone programming}
Lipschitz certification is fast to compute. However, the lower bound $\reachavoidcertificateL$ can be conservative. In this subsection, we propose another certification method using second-order cone programming (SOCP), aiming to provide a less conservative RA certification. Our key idea is to construct SOCPs that search over a tight, convex outer approximation set of the dynamically feasible trajectories and verify whether the state trajectory reaches the target set safely under all disturbances, within a user-defined finite certification horizon $T$.




The construction of these SOCPs involves two steps. 

First, we formulate a \emph{surrogate RA problem}: A subset of the original target set $\targetset$, represented by the interior of a polytope $\surrogatetargetset:= \{ x: P_{i} x - k_i> 0, i\in \mathcal{I}_{\surrogatetargetset} \}$, is defined as a \emph{surrogate target set} $\surrogatetargetset$, where $\mathcal{I}_\surrogatetargetset$ is a finite index set of the polytope's edges. Additionally, we define a subset of the original constraint set $\constraintset$, represented by the intersection of a finite number of \revise{positive semidefinite quadratic functions' super-zero level sets $\surrogateconstraintset:=\{x: \frac{1}{2} x^\top Q_i x + q_i^\top x + b_i > 0, i\in\mathcal{I}_{\surrogateconstraintset}\}$}, as a \emph{surrogate (nonconvex) constraint set} $\surrogateconstraintset$. 
The set $\surrogateconstraintset$ could be nonconvex when approximating collision avoidance constraints. 


Subsequently, we leverage SOCPs to verify if a state trajectory from $x_0$ can reach $\surrogatetargetset$ while staying within $\surrogateconstraintset$ \revise{despite disturbances}. We achieve this by sequentially minimizing each function defined in $\surrogatetargetset$ and $\surrogateconstraintset$, iterating from the stage $\tau=0$ to $\tau=T$. 
\emph{If their minimum is positive at a stage $t$ and there is no intermediate stage $\tau <t$ such that $x_\tau$ is outside $\surrogateconstraintset$, we claim that the initial state $x_0$ can safely reach the original target set $\targetset$, under all possible disturbances. }

To be more specific, we can check if there exists a stage $t\in\{0,1,\dots,T\}$ such that, for all disturbances, all dynamically feasible states $x_t$, originating from the initial states set $\mathcal{E}_{\nominalx_0}:=\{x:\|x-\nominalx_0\|_2\le \xerr\}$, are within $\surrogatetargetset$ and for all stages $\tau\le t$, $x_\tau$ are within $\surrogateconstraintset$. If this condition is met, we claim that $\mathcal{E}_{\nominalx_0}$ 
is within the ground truth RA set $\raset$, i.e., $\mathcal{E}_{\nominalx_0}\subseteq \mathcal{R}$. Otherwise, some of its elements could be outside $\raset$, and therefore we do not claim $\mathcal{E}_{\nominalx_0}\subseteq \raset$. 
To make the analysis tractable, we define the nominal state trajectory $\{\nominalx_t\}_{t=0}^{T}$ and nominal control trajectory $\{\nominalu_t\}_{t=0}^{T-1}$ as in \eqref{eq:nominal traj}. 
The nominal state and control trajectories allow us to formulate a convex set $\dynamicsfeasiablesetsocp$ for outer approximating the set of dynamically feasible~states:
\begin{equation*}
\begin{aligned}
&\dynamicsfeasiablesetsocp:=\{x_t: \exists \{d_\tau\}_{\tau=0}^{t-1} \textrm{ and }x_0 \textrm{ such that }\forall \tau\le t-1,\\
&x_{\tau+1}\le \Aupper_{\tau} x_{\tau} + \Bupper_{\tau} \nominalu_\tau + \Dupper_{\tau} d_{\tau}+ \Cupper_{\tau}, \textcolor{gray}{(\textrm{Upper bound on $f$})}\\
& x_{\tau+1} \ge \Alower_{\tau} x_{\tau} + \Blower_{\tau} \nominalu_\tau + \Dlower_{\tau} d_{\tau} + \Clower_{\tau},\textcolor{gray}{(\textrm{Lower bound on $f$})}\\
& \|d_\tau\|_2\le \epsilon_d, \hspace{10.3em}\textcolor{gray}{(\textrm{Disturbance bound})}\\
& \|x_0 - \nominalx_0\|_2 \le \epsilon_x\} \hspace{8.em}\textcolor{gray}{(\textrm{Initial state bound})}
\end{aligned}
\end{equation*}
where \revise{the first two inequalities are element-wise and} the bounds on the dynamics $f$ can be derived using its Lipschitz constant or a Taylor series\footnote{\newrevise{For simplicity, we can also use $\dynamicsfeasiablesetLP$ to define the convex outer approximation set of the dynamically feasible states in SOCP certifications, i.e., $\dynamicsfeasiablesetsocp:= \dynamicsfeasiablesetLP$. 
}}. At a stage $t$, we can verify if $x_t\in \surrogatetargetset$ by solving a sequence of SOCPs iterating over all $i\in \mathcal{I}_\surrogatetargetset$, and checking if their minimum is positive:\vspace{-0.5em}

\begin{equation}\label{eq:qp_certification}
    \begin{aligned}
        \rlowersocpt:=\min_{i\in \mathcal{I}_{\surrogatetargetset}} \big\{\min_{x_t \in \dynamicsfeasiablesetsocp }\  & P_i 
            x_t  -k_i \big\}. 
    \end{aligned}\vspace{-0.5em}
\end{equation}
Similarly, for checking whether $x_t \in \surrogateconstraintset$, we can evaluate if the following term is positive:\vspace{-0.5em}
\begin{equation}
    \clowersocpt:=\min_{i\in \mathcal{I}_{\surrogateconstraintset}} \big\{ \min_{x_t \in \dynamicsfeasiablesetsocp} \frac{1}{2} x_t^\top Q_i x_t + q_i^\top x_t + b_i \big\}.\vspace{-0.5em}
\end{equation}
Combining the above two terms, we can construct a conservative certificate $\reachavoidcertificateqp$ for verifying if a state $\bar{x}_0$ and its neighboring set $\mathcal{E}_{\bar{x}_0}$ are within the ground truth RA set $\raset$,\vspace{-0.5em}
\begin{equation}\label{eq:socp lower bound}
\begin{aligned}
    \reachavoidcertificateqp(\bar{x}_0,T):= \max_{t = 0,\dots,T} \min \{\gamma^t \rlowersocpt ,  \min_{\tau=0,\dots,t} \gamma^\tau \clowersocptau \}.
\end{aligned}\vspace{-0.3em}
\end{equation}
This suggests\vspace{-0.5em}
\begin{equation}
    \begin{aligned}
        \reachavoidcertificateqp(\nominalx_0,T)>0 \Longrightarrow \valuepure(x_0)>0,\forall x_0\in \mathcal{E}_{\nominalx_0}
    \end{aligned}
\end{equation}
and $\{\nominalu_t\}_{t=0}^{T-1}$ are the \emph{certified control inputs}, capable of driving all $x\in \neighborset_{\nominalx_0}$ to the target set $\targetset$ safely \revise{despite disturbances}. 

\noindent \textbf{Running example (continued).} At stage $t$, we evaluate $\rlowersocpt = \min_{i}  \{\check{r}_{t,i}^{S}\}_{i=1}^6$ by solving the following SOCPs, where each $\check{r}_{t,i}^{S}$ corresponds a function in the definition of $\targetset$ in \eqref{eq:drone target}:\vspace{-0.5em}
\begin{equation*}
\begin{aligned}
&\check{r}_{t,1}^{S}=\min_{x_t\in \dynamicsfeasiablesetsocp} p_{y,t}^1 - p_{y,t}^2, 
&&\check{r}_{t,2}^{S}=\min_{x_t\in \dynamicsfeasiablesetsocp}  v_{y,t}^1 - v_{y,t}^2 \\
&\check{r}_{t,3}^{S}=\min_{x_t\in \dynamicsfeasiablesetsocp}0.3 - p_{x,t}^1 ,  &&\check{r}_{t,4}^{S}= 
\min_{x_t\in \dynamicsfeasiablesetsocp}  0.3 +p_{x,t}^1 \\
& \check{r}_{t,5}^{S}= \min_{x_t\in \dynamicsfeasiablesetsocp} 0.3 - p_{z,t}^1, &&\check{r}_{t,6}^{S}=\min_{x_t\in \dynamicsfeasiablesetsocp} 0.3 + p_{z,t}^1
\end{aligned}\vspace{-0.5em}
\end{equation*}
Similarly, we can evaluate $\clowersocpt=\min_i \{\check{c}_{t,i}^{S}\}_{i=1}^5$ by considering the following SOCPs, where each $\check{c}_{t,i}^{S}$ corresponds to a constraint function in \eqref{eq:constraint pass} and \eqref{eq:safety cone}:\vspace{-0.5em}
\begin{equation*}
    \begin{aligned}
        &\check{c}_{t,i}^{S}= \min_{x_t\in\dynamicsfeasiablesetsocp}   (-1)^i\times p_{x,t}^1 - p_{y,t}^1+0.05 ,  \hspace{0.2cm} i\in\{1,2\},\\  
        & \check{c}_{t,i}^{S} = \min_{x_t\in \dynamicsfeasiablesetsocp}  (-1)^i\times p_{z,t}^1 - p_{y,t}^1 + 0.05, \hspace{0.2cm} i\in\{3,4\}. \\ 
    \end{aligned}\vspace{-0.5em}
\end{equation*}
We overapproximate the maximum height difference between two drones via $\Delta_{z,t}^{21}:=\max_{x_t\in \dynamicsfeasiablesetsocp}p_{z,t}^2 - p_{z,t}^1$, and consider \vspace{-0.5em}
\begin{equation*}
        \check{c}_{t,5}^{S} = \hspace{-0.2cm}\min_{x_t\in \dynamicsfeasiablesetsocp} \left\|\begin{bmatrix}
            p_{x,t}^1 - p_{x,t}^2 \\ p_{y,t}^1 - p_{y,t}^2
        \end{bmatrix}\right\|_2^2 - (1+\max(\Delta_{z,t}^{21},0))\times 0.2 
\end{equation*}
\begin{remark}
1)
SOCP certification can be less conservative than Lipschitz certification when $\surrogatetargetset$ and $\surrogateconstraintset$ can represent $\targetset$ and $\constraintset$ exactly, and $\dynamicsfeasiablesetsocp$ is a subset of the Lipschitz dynamically feasible set $\dynamicsfeasiablesetLP$, as defined in Section~\ref{sec:Lipschitz}, for all $t\le T$. 
This is because Lipschitz certification adds extra conservatism when estimating lower bounds of $r(x)$ and $c(x)$ using the Lipschitz constant, as shown in \eqref{eq:LP, reward lower} and \eqref{eq:LP, constraint lower}; 2) However, Lipschitz certification is faster to compute than SOCP because calculating $\rlowerLPt$ and $\clowerLPt$ is easier than~evaluating~$\rlowersocpt$~and~$\clowersocpt$; \revise{3) SOCP certification employs convex over-approximations of the dynamically feasible state sets, similar to tube MPC \cite{lopez2019tubempc}. However, it differs from tube MPC in that we compute the worst-case (nonconvex) constraint violation and target set deviation \newrevise{rather than the control inputs that optimize an objective.}
}
\end{remark}\vspace{-1em}
\begin{remark}
    The computational complexity of evaluating $\reachavoidcertificateL(x,T)$ and $\reachavoidcertificateqp(x,T)$ scales \textbf{polynomially} with both the dimension of the dynamical system and the length of $T$.
\end{remark}



\section{Combining reachability learning and certification}
We integrate the reachability learning and certification into a new framework of computing trustworthy RA sets, as described in Algorithm~\ref{alg:calibration}. 
The super zero-level set of $\NNvalue$ provides an estimation of the ground truth RA set. We use $\NNcontrolpolicy$ to certify a set of states, ensuring deterministic RA guarantees there. In particular, we can apply certification either online or offline:

\noindent\textbf{Online certification:} Let $x$ be an arbitrary state. We can use the RA certificates $\reachavoidcertificateL(x,T)$ in \eqref{eq:Lipschitz lower bound} or $\reachavoidcertificateqp(x,T)$ in \eqref{eq:socp lower bound} as online RA certification methods, verifying if all elements in $\neighborset_x=\{x':\|x'-x\|_2\le\xerr\}$ can reach the target set safely \revise{despite disturbances}. We can compute $\reachavoidcertificateL(x,T)$ and $\reachavoidcertificateqp(x,T)$ in real-time (10 Hz \newrevise{in our examples}), as shown in Figure~\ref{fig:histogram}. \revise{It can also be integrated into the safety filter proposed in \cite{hsu2023isaacs}, offering more robust RA certification than \cite{hsu2023isaacs} by verifying that all states in \(\neighborset_x\) can reach the target set safely, thereby enabling RA capability verification without perfect state estimation.}

\noindent\textbf{Offline certification:} 
We consider a finite set of states $\mathcal{L}:=\{x^{(i)}\}_{i=1}^N$ such that the union of their neighboring sets $\neighborset_{x^{(i)}}=\{x:\|x-x^{(i)}\|_2\le \xerr\}$ covers the set of states that we aim at certifying. For example, this includes the area near the orange gate in drone racing, as show in Figure~\ref{fig:front}. We enumerate each element $x\in \mathcal{L}$ and certify whether $\mathcal{E}_{x}\subseteq \raset$ by checking if $\reachavoidcertificateL$ in \eqref{eq:Lipschitz lower bound} or $\reachavoidcertificateqp$ in \eqref{eq:socp lower bound} is positive. The union $\certifiedset$ of those certified sets constitutes a subset of the ground truth RA set $\raset$. For all elements in $\certifiedset$, we guarantee that they can reach the target set safely \revise{under} potential disturbances. 


    




\begin{algorithm}[t]
	\SetAlgoLined
	\caption{Certifiable Reachability Learning:}\label{alg:calibration}\tikzmark{left}
	\hspace{-0.5em}\textbf{Require:} \ an arbitrary $\gamma\in(0,1)$, a finite list of states $\mathcal{L}$, and a certification horizon $T<\infty$\;
    \textbf{Initialization:} certified RA set $\certifiedset\gets\{\}$\;
    Learn $\NNcontrolpolicy$, $\NNdisturbancepolicy$ and $\NNvalue$ via max-min DDPG \cite{li2019robust}\;
    \tcp{Certifications:}
    \For{$x_0 \in \mathcal{L}$ }
    {
        \If {$\reachavoidcertificateL(x_0,T)>0$ \emph{or} $\reachavoidcertificateqp(x_0,T)>0$} {$\certifiedset\gets \certifiedset\cup \{x:\|x_0-x\|_2\le \xerr\}$}
    }
    \algorithmicreturn \  certified RA set $\certifiedset$
\end{algorithm}

\section{Experiments}\label{sec:experiment}
We test our reachability learning and certification methods\footnote{
Experiment code and hardware drone racing video are available at \href{https://github.com/jamesjingqili/Lipschitz_Continuous_Reachability_Learning.git}{https://\\github.com/jamesjingqili/Lipschitz\_Continuous\_Reachability\_Learning.git}.} in a 12-dimensional drone racing hardware experiment (Figure~\ref{fig:hardware experiment}), and a triple-vehicle highway take-over simulation. In the highway simulation, we control one ego vehicle, modeled with nonlinear unicycle dynamics, to safely overtake another vehicle while avoiding a third vehicle driving in the opposite direction, as shown in Figure \ref{fig:highway_sets}. Figures \ref{fig:front} and \ref{fig:highway_sets} demonstrate the high quality of the learned and certified RA sets.


\subsection{Hypothesis 1: Our learned policy has a higher success rate than state-of-the-art constrained RL methods}
We compare our learned policy $\NNcontrolpolicy$ with Deep Deterministic Policy Gradient-Lagrangian (DDPG-L)\cite{chow2019lyapunov}, Soft Actor Critic-Lagrangian (SAC-L) \cite{yang2021wcsac}, and Constrained Policy Optimization (CPO) \cite{achiam2017constrained}. 
We summarize the results in Table I. The success rate is estimated by computing the ratio of sampled initial states that can reach the target set safely under randomly generated disturbances. Our method achieves a 1.0 success rate when the initial states are sampled from the SOCP certified set, validating the deterministic guarantee that all elements in the certified sets can safely reach the target set \revise{despite disturbances}. 

\begin{table}[t!]
\centering
\small

\setlength{\tabcolsep}{3pt}

\begin{adjustbox}{max width=\linewidth}
\begin{tabular}{l|ccc|cccc}
\toprule
\multirow{2}{*}{\begin{tabular}[c]{@{}l@{}}Initial states \\ uniform sampling\end{tabular}} 
        & \multicolumn{3}{c|}{Drone racing} 
        & \multicolumn{4}{c}{Highway} \\
\cmidrule(lr){2-4}\cmidrule(lr){5-8}
     & DDPG-L & SAC-L & Ours 
     & DDPG-L & SAC-L & CPO & Ours \\
\midrule
In a large bounded state set    & 0.5716 & 0.7291 & \textbf{0.7655} & 0.7512 & 0.6343 & 0.5552 & \textbf{0.8782} \\
In the learned reach-avoid set  & 0.6276 & 0.8006 & \textbf{0.8889} & 0.9430 & 0.9149 & 0.9566 & \textbf{0.9924} \\
In the SOCP certified set       & 0.9673 & 0.9948 & \textbf{1.0000} & 0.9111 & 0.8850 & 0.9451 & \textbf{1.0000} \\
\bottomrule
\end{tabular}
\end{adjustbox}
\caption{Success rates table. Our method achieves a $1.0$ success rate when the initial states are sampled from the SOCP certified~set. CPO fails to converge for the drone racing experiment due to the complex and nonconvex constraints.}\label{table:success rate}\vspace{-1em}
\end{table}

\begin{figure}[t!]
    \centering
    \includegraphics[width=0.49\textwidth, trim={0cm 0.3cm 0cm 0cm}]{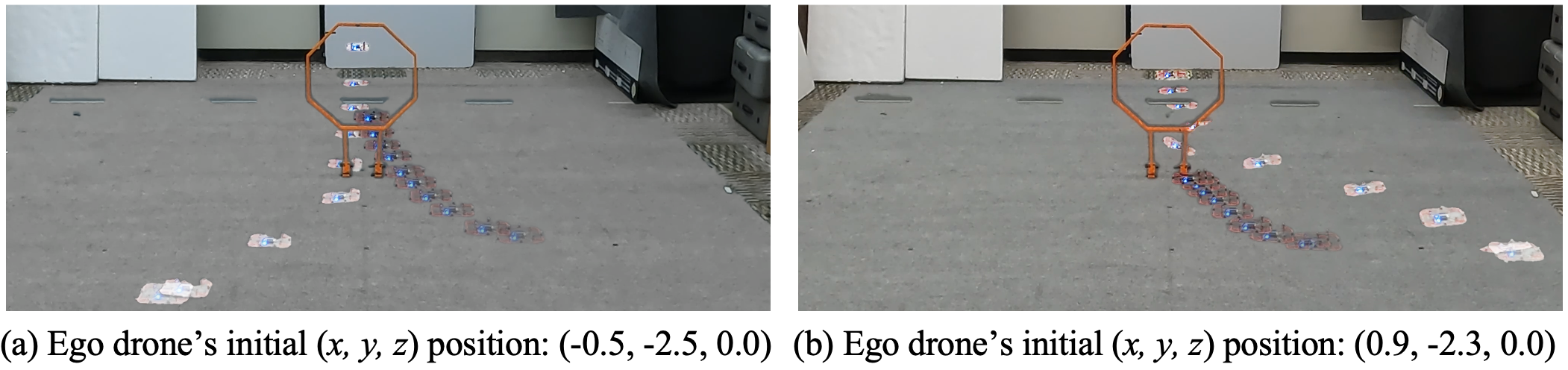}\vspace{-0.3em}
    \caption{\revise{We sampled 50 initial states from the SOCP certified set shown in Figure~\ref{fig:front}. A few crashes occurred due to insufficient battery charge or Vicon sensor failures caused by natural light. These instances were excluded as outliers. With a fully charged battery and no Vicon system failures, the ego drone successfully overtook the other drone from each of the 50 initial states, despite the latter’s uncertain acceleration.
    } We visualize two hardware experiments in the above subfigures. The remaining 9-dimensional initial state includes $[v_{x,t}^1,v_{y,t}^1,v_{z,t}^1, p_{x,t}^2,p_{y,t}^2,p_{z,t}^2, v_{x,t}^2, v_{y,t}^2, v_{z,t}^2]=[0,0.7, 0,0.4,-2.2,0,0,0.3,0]$.  
    }\vspace{-1.em}
    \label{fig:hardware experiment}
\end{figure}

\begin{figure}[t!]\vspace{-0.2em}
    \centering
    \includegraphics[width = 0.49\textwidth,trim={0cm 0.5cm 0cm 0cm}]{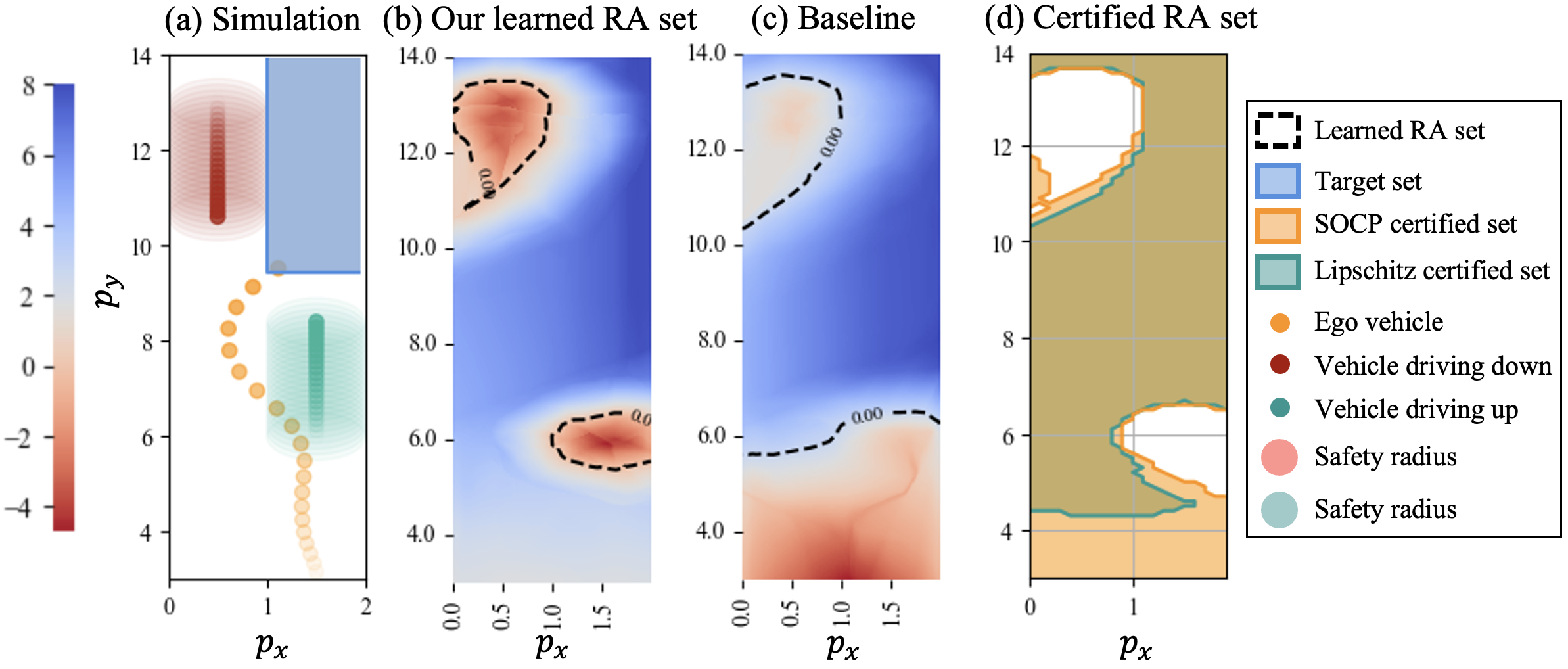}\vspace{-0.3em}
    \caption{Highway reachability analysis: In (a), we simulate the nonlinear dynamics with the learned policy $\NNcontrolpolicy$ and randomly sampled disturbances on other vehicles' acceleration. The 10-dimensional state space includes $[p_{x,t}^1, p_{y,t}^1,v_{t}^1,\theta_t^1, p_{x,t}^2,p_{y,t}^2, v_{y,t}^2, p_{x,t}^3,p_{y,t}^3, v_{y,t}^3]$. {
    The $p_y$-axis movement of the red and green agents is modeled using double integrator dynamics, while their initial $p_x$ positions are sampled randomly and remain stationary during simulation}. In (b), we project our learned value function, with $\gamma=0.95$, onto the $(x,y)$ position of the ego vehicle. In (c), we plot the RA set learned using the state-of-the-art method \cite{li2023learning,nguyen2024gameplay} with $\gamma=0.95$. As suggested in \cite{hsusafety}, annealing $\gamma\to 1$ is necessary for prior works; otherwise, the learned RA sets in prior works are conservative. In (d), we plot our certified RA sets. }\vspace{-1.3em}
    \label{fig:highway_sets}
\end{figure}


\begin{figure}[t!]
    \centering
    \includegraphics[width = 0.49\textwidth, trim={0cm 0.7cm 0cm 0cm}]{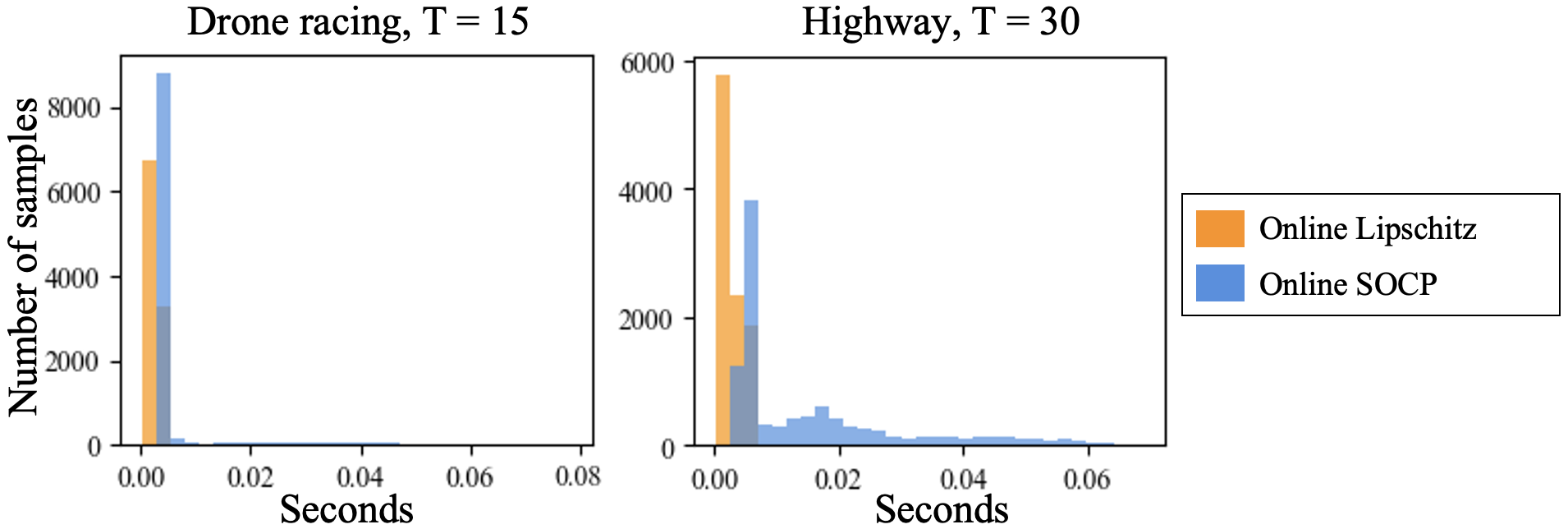}
    \caption{Histogram of the time required for computing $\reachavoidcertificateL(x,T)$ and $\reachavoidcertificateqp(x,T)$ for each of the 10,000 randomly sampled states $x$. The certification horizons for drone racing and highway are $T=15$ and $T=30$, respectively. 
    }
    \label{fig:histogram}
\end{figure}

\subsection{Hypothesis 2: Our online RA set certification methods can be computed in real-time}
Figure \ref{fig:histogram} shows that $\reachavoidcertificateL(x,T)$ and $\reachavoidcertificateqp(x,T)$ can be computed in real-time to certify if all elements in $\neighborset_{x}=\{x':\|x'-x\|_2\le 0.1\}$ can safely reach the target set, \revise{under} all potential disturbances. This enables real-time online certification.\vspace{-1em} 


\begin{figure}[t!]
    \centering
    \includegraphics[width=0.49\textwidth, trim={0cm 0.5cm 0cm 0cm}]{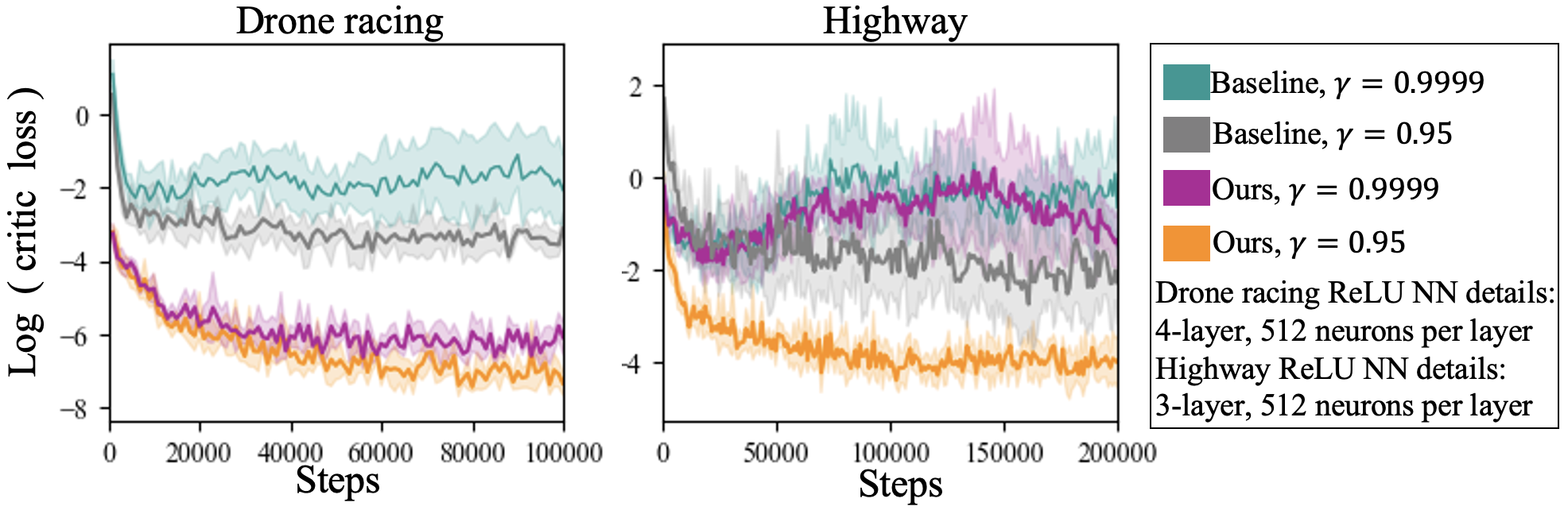}\vspace{-0.4em}
    \caption{
    We compare the convergence of the critic loss under our new Bellman equation with the baseline from previous works \cite{nguyen2024gameplay, li2023learning}, using different $\gamma$ values but identical training parameters. Our critic loss with $\gamma=0.95$ converges faster than with $\gamma=0.9999$, likely due to the Lipschitz continuity of $\valuefunc$ at $\gamma=0.95$. \revise{The training speed is around 1700 steps per second. }
    }\vspace{-1em}
    \label{fig:learning 1}
\end{figure}
\begin{figure}[t!]\vspace{-0.3em}
    \centering
    \includegraphics[width=0.49\textwidth, trim={0cm 0.5cm 0cm 0cm}]{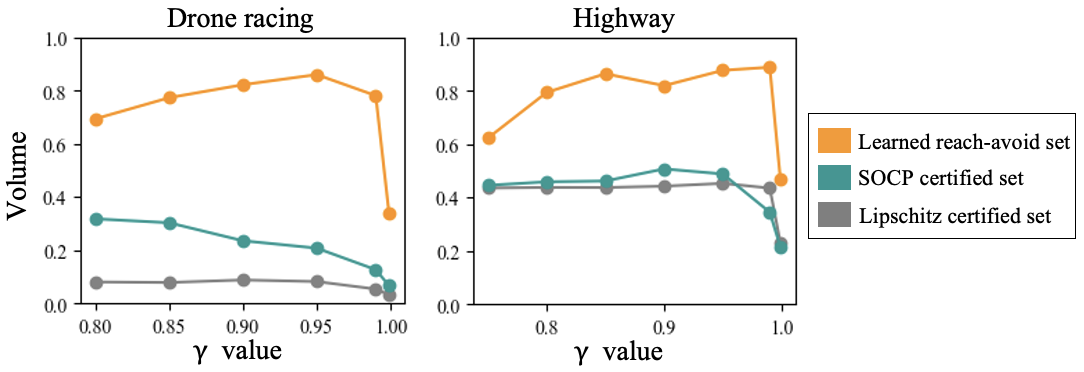}
    \caption{The volumes of the learned RA set, SOCP certified set, and the Lipschitz certified set change as $\gamma$ varies. We estimate the set volumes using the Monte Carlo method with 10,000 random samples in the state space. }\vspace{-1.1em}
    \label{fig:gamma effect 1}
\end{figure}
\begin{figure}[t!]\vspace{-0.3em}
    \centering
    \includegraphics[width=0.49\textwidth, trim={0cm 0.5cm 0cm 0cm}]{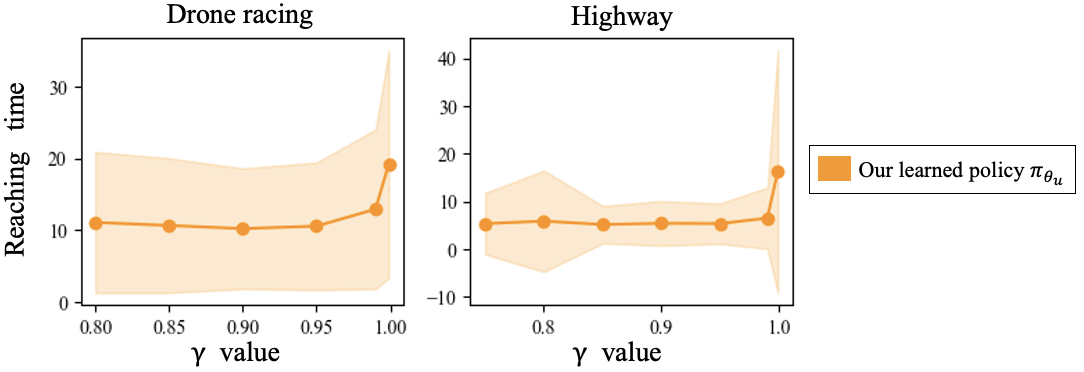}
    \caption{The average time taken for reaching the target set grows as $\gamma$ increasing. }\vspace{-1.5em}
    \label{fig:gamma effect}
\end{figure}

\subsection{Hypothesis 3: The Lipschitz continuity of our new value function appears to accelerate learning
}\vspace{-0.2em}
In Figure \ref{fig:learning 1}, 
we compare the critic loss (defined in \eqref{eq:critic loss} to measure the Bellman equation error) under different Bellman equations and time-discount factor $\gamma$ values. 
Our critic loss converges rapidly when $\gamma$ is chosen to ensure the Lipschitz continuity of our new value function. \vspace{-1em}

\subsection{The trade-off of selecting a time-discount factor \texorpdfstring{$\gamma$}{}}\label{sec:gamma experiment}\vspace{-0.2em}
We summarize our result in Figures \ref{fig:gamma effect 1} and \ref{fig:gamma effect}. With a small $\gamma$, \revise{the learned RA sets can be conservative due to numerical errors, as an initial state whose optimal trajectory reaches the target set at a later stage may have near-zero time-discounted RA measures.} However, the optimal policy tends to drive the state to the target set rapidly, as depicted in Figure~\ref{fig:gamma effect}. \revise{Conversely, a large $\gamma$ can induce discontinuities in the value function, destabilizing learning and yielding suboptimal solutions.} In the drone racing and highway experiments, we find that $\gamma = 0.95$ ensures the Lipschitz continuity of $\valuefunc$, thereby enhancing learning efficiency, and also mitigates unnecessary conservatism. \vspace{-1em}



\section{\revise{Conclusion and Future Work}}
\revise{We propose a new framework for learning trustworthy reach-avoid (RA) sets. Our method features a newly designed RA value function that offers improved computational efficiency.} 
We employ max-min DDPG to learn our value functions and propose two efficient methods to certify whether a set of states can safely reach the target set with deterministic guarantees. We validate our methods through drone racing hardware experiments and highway take-over simulations. \revise{Our certification methods can be performed in real time, but they rely on offline value function learning beforehand. Future research may explore online reachability learning, as well as more efficient RA set certification methods.
}

\ifCLASSOPTIONcaptionsoff
  \newpage
\fi

\bibliographystyle{ieeetr}
\bibliography{references}

\begin{thebibliography}{10}

\bibitem{tomlin1998conflict}
C.~Tomlin, G.~J. Pappas, and S.~Sastry, ``Conflict resolution for air traffic management: A study in multiagent hybrid systems,'' {\em IEEE Transactions on Automatic Control}, vol.~43, no.~4, pp.~509--521, 1998.

\bibitem{lygeros2011differentialgameshj}
K.~Margellos and J.~Lygeros, ``Hamilton–jacobi formulation for reach–avoid differential games,'' {\em IEEE Transactions on Automatic Control}, vol.~56, no.~8, pp.~1849--1861, 2011.

\bibitem{fisac2015reach}
J.~F. Fisac, M.~Chen, C.~J. Tomlin, and S.~S. Sastry, ``Reach-avoid problems with time-varying dynamics, targets and constraints,'' in {\em Proceedings of the 18th international conference on hybrid systems: computation and control}, pp.~11--20, 2015.

\bibitem{bellman1957dynamic}
R.~Bellman, R.~Corporation, and K.~M.~R. Collection, {\em Dynamic Programming}.
\newblock Rand Corporation research study, Princeton University Press, 1957.

\bibitem{bansal2020deepreach}
S.~Bansal and C.~J. Tomlin, ``Deepreach: A deep learning approach to high-dimensional reachability,'' in {\em 2021 IEEE International Conference on Robotics and Automation (ICRA)}, pp.~1817--1824, IEEE, 2021.

\bibitem{fisac2019bridging}
J.~F. Fisac, N.~F. Lugovoy, V.~Rubies-Royo, S.~Ghosh, and C.~J. Tomlin, ``Bridging hamilton-jacobi safety analysis and reinforcement learning,'' in {\em 2019 International Conference on Robotics and Automation (ICRA)}, pp.~8550--8556, IEEE, 2019.

\bibitem{hsusafety}
K.-C. Hsu, V.~Rubies-Royo, C.~J. Tomlin, and J.~F. Fisac, ``Safety and liveness guarantees through reach-avoid reinforcement learning,'' {\em arXiv preprint arXiv:2112.12288}, 2021.

\bibitem{hsu2023sim}
K.-C. Hsu, A.~Z. Ren, D.~P. Nguyen, A.~Majumdar, and J.~F. Fisac, ``Sim-to-lab-to-real: Safe reinforcement learning with shielding and generalization guarantees,'' {\em Artificial Intelligence}, vol.~314, p.~103811, 2023.

\bibitem{hsu2023safety}
K.-C. Hsu, H.~Hu, and J.~F. Fisac, ``The safety filter: A unified view of safety-critical control in autonomous systems,'' {\em Annual Review of Control, Robotics, and Autonomous Systems}, vol.~7, 2023.

\bibitem{hsu2023isaacs}
K.-C. Hsu, D.~P. Nguyen, and J.~F. Fisac, ``Isaacs: Iterative soft adversarial actor-critic for safety,'' in {\em Learning for Dynamics and Control Conference}, pp.~90--103, PMLR, 2023.

\bibitem{li2023learning}
Z.~Li, C.~Hu, W.~Zhao, and C.~Liu, ``Learning predictive safety filter via decomposition of robust invariant set,'' {\em arXiv preprint arXiv:2311.06769}, 2023.

\bibitem{lin2023generating}
A.~Lin and S.~Bansal, ``Generating formal safety assurances for high-dimensional reachability,'' in {\em 2023 IEEE International Conference on Robotics and Automation (ICRA)}, pp.~10525--10531, IEEE, 2023.

\bibitem{lin2024verification}
A.~Lin and S.~Bansal, ``Verification of neural reachable tubes via scenario optimization and conformal prediction,'' in {\em 6th Annual Learning for Dynamics \& Control Conference}, pp.~719--731, PMLR, 2024.

\bibitem{nguyen2024gameplay}
D.~P. Nguyen, K.-C. Hsu, W.~Yu, J.~Tan, and J.~F. Fisac, ``Gameplay filters: Robust zero-shot safety through adversarial imagination,'' in {\em 8th Annual Conference on Robot Learning}, 2024.

\bibitem{flem2024experimental}
A.~A. Flem, M.~Ghirardelli, S.~T. Kral, E.~Cheynet, T.~O. Kristensen, and J.~Reuder, ``Experimental characterization of propeller-induced flow (pif) below a multi-rotor uav,'' {\em Atmosphere}, vol.~15, no.~3, p.~242, 2024.

\bibitem{bertsekas2012dynamic}
D.~Bertsekas, {\em Dynamic programming and optimal control: Volume I}, vol.~4.
\newblock Athena scientific, 2012.

\bibitem{singh2024imposing}
A.~Singh, Z.~Feng, and S.~Bansal, ``Imposing exact safety specifications in neural reachable tubes,'' {\em arXiv preprint arXiv:2404.00814}, 2024.

\bibitem{nguyen2021stability}
H.~H. Nguyen, T.~Zieger, S.~C. Wells, A.~Nikolakopoulou, R.~D. Braatz, and R.~Findeisen, ``Stability certificates for neural network learning-based controllers using robust control theory,'' in {\em 2021 American Control Conference (ACC)}, pp.~3564--3569, IEEE, 2021.

\bibitem{korda2022stability}
M.~Korda, ``Stability and performance verification of dynamical systems controlled by neural networks: algorithms and complexity,'' {\em IEEE Control Systems Letters}, vol.~6, pp.~3265--3270, 2022.

\bibitem{schwan2023stability}
R.~Schwan, C.~N. Jones, and D.~Kuhn, ``Stability verification of neural network controllers using mixed-integer programming,'' {\em IEEE Transactions on Automatic Control}, 2023.

\bibitem{coogan2020mixed}
S.~Coogan, ``Mixed monotonicity for reachability and safety in dynamical systems,'' in {\em 2020 59th IEEE Conference on Decision and Control (CDC)}, pp.~5074--5085, IEEE, 2020.

\bibitem{lew2021sampling}
T.~Lew and M.~Pavone, ``Sampling-based reachability analysis: A random set theory approach with adversarial sampling,'' in {\em Conference on robot learning}, pp.~2055--2070, PMLR, 2021.

\bibitem{everett2021efficient}
M.~Everett, G.~Habibi, and J.~P. How, ``Efficient reachability analysis of closed-loop systems with neural network controllers,'' in {\em 2021 IEEE International Conference on Robotics and Automation (ICRA)}, pp.~4384--4390, IEEE, 2021.

\bibitem{hu2020reach}
H.~Hu, M.~Fazlyab, M.~Morari, and G.~J. Pappas, ``Reach-sdp: Reachability analysis of closed-loop systems with neural network controllers via semidefinite programming,'' in {\em 2020 59th IEEE Conference on Decision and Control (CDC)}, pp.~5929--5934, IEEE, 2020.

\bibitem{kwonconformalized}
Y.~Kwon, J.~Michaux, S.~Isaacson, B.~Zhang, M.~Ejakov, K.~A. Skinner, and R.~Vasudevan, ``Conformalized reachable sets for obstacle avoidance with spheres,'' in {\em CoRL Workshop on SAFE-ROL}.

\bibitem{prajna2004safety}
S.~Prajna and A.~Jadbabaie, ``Safety verification of hybrid systems using barrier certificates,'' in {\em International Workshop on Hybrid Systems: Computation and Control}, pp.~477--492, Springer, 2004.

\bibitem{mazouz2022safety}
R.~Mazouz, K.~Muvvala, A.~Ratheesh~Babu, L.~Laurenti, and M.~Lahijanian, ``Safety guarantees for neural network dynamic systems via stochastic barrier functions,'' {\em Advances in Neural Information Processing Systems}, vol.~35, pp.~9672--9686, 2022.

\bibitem{xiao2023barriernet}
W.~Xiao, T.-H. Wang, R.~Hasani, M.~Chahine, A.~Amini, X.~Li, and D.~Rus, ``Barriernet: Differentiable control barrier functions for learning of safe robot control,'' {\em IEEE Transactions on Robotics}, 2023.

\bibitem{ames2016control}
A.~D. Ames, X.~Xu, J.~W. Grizzle, and P.~Tabuada, ``Control barrier function based quadratic programs for safety critical systems,'' {\em IEEE Transactions on Automatic Control}, vol.~62, no.~8, pp.~3861--3876, 2016.

\bibitem{taylor2020learning}
A.~Taylor, A.~Singletary, Y.~Yue, and A.~Ames, ``Learning for safety-critical control with control barrier functions,'' in {\em Learning for Dynamics and Control}, pp.~708--717, PMLR, 2020.

\bibitem{qin2021learning}
Z.~Qin, K.~Zhang, Y.~Chen, J.~Chen, and C.~Fan, ``Learning safe multi-agent control with decentralized neural barrier certificates,'' in {\em International Conference on Learning Representations}, 2021.

\bibitem{richalet1976algorithmic}
J.~Richalet, ``Algorithmic control of industrial processes,'' {\em Proc. of the 4\^{} th IFAC Sympo. on Identification and System Parameter Estimation}, pp.~1119--1167, 1976.

\bibitem{lopez2019tubempc}
B.~T. Lopez, J.-J.~E. Slotine, and J.~P. How, ``Dynamic tube mpc for nonlinear systems,'' in {\em 2019 American Control Conference (ACC)}, pp.~1655--1662, 2019.

\bibitem{achiam2017constrained}
J.~Achiam, D.~Held, A.~Tamar, and P.~Abbeel, ``Constrained policy optimization,'' in {\em International conference on machine learning}, pp.~22--31, PMLR, 2017.

\bibitem{chow2019lyapunov}
Y.~Chow, O.~Nachum, A.~Faust, E.~Duenez-Guzman, and M.~Ghavamzadeh, ``Lyapunov-based safe policy optimization for continuous control,'' {\em arXiv preprint arXiv:1901.10031}, 2019.

\bibitem{yang2021wcsac}
Q.~Yang, T.~D. Sim{\~a}o, S.~H. Tindemans, and M.~T. Spaan, ``Wcsac: Worst-case soft actor critic for safety-constrained reinforcement learning,'' in {\em Proceedings of the AAAI Conference on Artificial Intelligence}, vol.~35 Issue 12, pp.~10639--10646, 2021.

\bibitem{tomlin1999computing}
C.~Tomlin, J.~Lygeros, and S.~Sastry, ``Computing controllers for nonlinear hybrid systems,'' in {\em Hybrid Systems: Computation and Control: Second International Workshop, HSCC’99 Berg en Dal, The Netherlands, March 29--31, 1999 Proceedings 2}, pp.~238--255, Springer, 1999.

\bibitem{crazyflie2017}
W.~Giernacki, M.~Skwierczyński, W.~Witwicki, P.~Wroński, and P.~Kozierski, ``Crazyflie 2.0 quadrotor as a platform for research and education in robotics and control engineering,'' in {\em the 22nd International Conference on Methods and Models in Automation and Robotics}, pp.~37--42, 2017.

\bibitem{pichierri2023crazychoir}
L.~Pichierri, A.~Testa, and G.~Notarstefano, ``Crazychoir: Flying swarms of crazyflie quadrotors in ros 2,'' {\em IEEE Robotics and Automation Letters}, vol.~8, no.~8, pp.~4713--4720, 2023.

\bibitem{bansal2017hamilton}
S.~Bansal, M.~Chen, S.~Herbert, and C.~J. Tomlin, ``Hamilton-jacobi reachability: A brief overview and recent advances,'' in {\em 2017 IEEE 56th Annual Conference on Decision and Control (CDC)}, pp.~2242--2253, IEEE, 2017.

\bibitem{gouk2021regularisation}
H.~Gouk, E.~Frank, B.~Pfahringer, and M.~J. Cree, ``Regularisation of neural networks by enforcing lipschitz continuity,'' {\em Machine Learning}, vol.~110, no.~2, pp.~393--416, 2021.

\bibitem{xiong2022deterministic}
H.~Xiong, T.~Xu, L.~Zhao, Y.~Liang, and W.~Zhang, ``Deterministic policy gradient: Convergence analysis,'' in {\em Uncertainty in Artificial Intelligence}, pp.~2159--2169, PMLR, 2022.

\bibitem{li2019robust}
S.~Li, Y.~Wu, X.~Cui, H.~Dong, F.~Fang, and S.~Russell, ``Robust multi-agent reinforcement learning via minimax deep deterministic policy gradient,'' in {\em Proceedings of the AAAI conference on artificial intelligence}, vol.~33 Issue 1, pp.~4213--4220, 2019.

\end{thebibliography}



%




\end{document}